\documentclass[fleqn]{llncs}

\usepackage{pslatex,amssymb,pgf,tikz,lscape,cite}
\usetikzlibrary{arrows,automata}



\newcommand{\fif}{\rmiff}

\newcommand{\rmiff}{\mathbin{~\mbox{iff}~}}


\newcommand{\auth}{\textsf}
\newcommand{\book}{\textit}

\newcommand{\jour}{\textsl}
\newcommand{\vol}{}

\newcommand{\pub}[2]{ (#1, #2)}
\newcommand{\yr}[1]{, #1}
\newcommand{\pp}[2]{, #1--#2}

\newcommand{\defn}{\mathrel{\mbox{$~\stackrel{\rm def}{=}~$}}}







\newcommand{\opr}{\mbox{$\mathit{Opr}$}}
\newcommand{\sub}{\mbox{$\mathit{Subf}$}}
\newcommand{\intv}{\mbox{$\mathit{Intv}_w$}}

\newcommand{\pos}{\mbox{$\mathit{Pos}_w$}}

\newcommand{\until}{\textsf{U}}
\newcommand{\since}{\textsf{S}}
\newcommand{\fut}{\textsf{F}}
\newcommand{\past}{\textsf{P}}

\newcommand{\ltl}{\mbox{$\mathit{LTL}$}}
\newcommand{\ltlbin}{\mbox{$\ltl[\until,\since]$\/}}




%
%



\newcommand{\oomit}[1]{}
\newcommand{\df}{\defn}








\newcommand{\ap}{\mbox{$\mathit{AP}$}}

\newcommand{\mtlus}{\mbox{$\mathit{MTL[\until_I,\since_I]}$}}
\newcommand{\mtlfp}{\mbox{$\mathit{MTL[\fut_I,\past_I]}$}}
\newcommand{\mitlus}{\mbox{$\mathit{MITL[\until_I,\since_I]}$}}
\newcommand{\bmtlus}{\mbox{$\mathit{Bounded MTL[\until_I,\since_I]}$}}
\newcommand{\bmtlfp}{\mbox{$\mathit{Bounded MTL[\fut_I,\past_I]}$}}
\newcommand{\mitlfp}{\mbox{$\mathit{MITL[\fut_I,\past_I]}$}}
\newcommand{\bmitlus}{\mbox{$\mathit{Bounded MITL[\until_I,\since_I]}$}}
\newcommand{\bmitlfp}{\mbox{$\mathit{Bounded MITL[\fut_I,\past_I]}$}}
\newcommand{\tptlus}{\mbox{$\mathit{TPTL[\until,\since]}$}}
\newcommand{\tptlfp}{\mbox{$\mathit{TPTL[\fut,\past]}$}}
\newcommand{\tptlf}{\mbox{$\mathit{TPTL[\fut]}$}}

\newcommand{\potdta}{\mbox{$\mathit{po2DTA}$}}


\newcommand{\ttl}{\mbox{$\mathit{TTL[X_\theta, Y_\theta]}$}}

\newcommand{\ssp}{\mbox{$\mathit{Spoiler}$}}
\newcommand{\ddp}{\mbox{$\mathit{Duplicator}$}}
\newcommand{\aaa}{\mathcal A}
\newcommand{\bbb}{\mathcal B}

\newcommand{\gameeq}[1]{\approx_{#1}}
\newcommand{\formeq}[1]{\equiv_{#1}}

\newcommand{\maxint}{\mbox{$\mathit{MaxInt}$}}



\newcommand{\ttlxy}{\mbox{$\mathit{TTL}[X_\theta,Y_\theta]$}}
\newcommand{\mtlu}{\mbox{$\mathit{MTL[\until_I]}$}}

\newcommand{\tptlu}{\mbox{$\mathit{TPTL[\until]}$}}

\newcommand{\rlpos}{{\mathbb{R}_{0}}}

\newcommand{\INTV}{\mbox{$\mathit{Iv}$}}
\newcommand{\zintv}{\mathbb{Z}I}
\newcommand{\extintv}{\mathbb{Z}IExt}
\newcommand{\bintv}{Bd\mathbb{Z}I}
\newcommand{\bextintv}{Bd\mathbb{Z}IExt}
\newcommand{\deltabar}{{\overline{\delta}}}

\begin{document}

\title{On Expressive Powers of Timed Logics:\\ Comparing Boundedness, Non-punctuality, and Deterministic Freezing}
\author{Paritosh K.~Pandya and Simoni S.~Shah}
\institute{Tata Institute of Fundamental Research,
Colaba, Mumbai \texttt{400005}, India \\
\email{\{pandya,simoni\}@tcs.tifr.res.in}}
\pagestyle{empty}
\maketitle

\begin{abstract}
Timed temporal logics exhibit a bewildering diversity of operators and the resulting decidability and expressiveness properties also vary considerably. We study the expressive power of timed logics $\tptlus$ and $\mtlus$ as well as of their several fragments. Extending the LTL EF games of Etessami and Wilke, we define $MTL$ Ehrenfeucht-Fra\"{\i}ss{\'e}  games on a pair of timed words. Using the associated EF theorem, we show that, expressively, the timed logics 
$\bmtlus$,  $\mtlfp$   and    $\mitlus$ (respectively incorporating the restrictions of boundedness, unary modalities and non-punctuality), are all pairwise incomparable. As our first main result,  we show that \mtlus\/ is \emph{strictly contained} within the freeze logic \tptlus\/ for both weakly and strictly monotonic timed words, thereby extending the result of Bouyer \emph{et al} and completing the proof of the original conjecture of Alur and Henziger from 1990. We also relate the expressiveness of a recently proposed deterministic freeze logic $\ttlxy$ (with NP-complete satisfiability) to $MTL$. 
As our second main result, we show by an explicit reduction that $\ttlxy$ lies strictly within the unary, non-punctual logic $\mitlfp$.
This shows that deterministic freezing with punctuality is expressible in the non-punctual \mitlfp.
\end{abstract}

\section{Introduction}
Temporal logics are well established formalisms for specifying qualitative ordering constraints on the  sequence of observable events. 
Real-time temporal logics extend this vocabulary  with specification of quantitative timing constraints between these events.

There are two well-established species of timed logics with linear time. The logic $\tptlus$ makes use of freeze quantification together with untimed temporal modalities and explicit constraints on frozen time values; the logic $\mtlus$ uses time interval constrained modalities $\until_I$ and $\since_I$. For example,the \tptlus\/ formula $x.(a\until(b\land T-x<2))$ and the \mtlus\/ formula $a\until_{[0,2)}b$ both characterize the set of words that have a letter  $b$ with time stamp $<$ 2 where this $b$ is preceded only by a string of letters $a$. Timed logics may be defined over timed words (also called pointwise time models)  or over signals (also called continuous time models). Weak monotonicity (as against strict monotonicity) allows a sequence of events to occur at the same time point. In this paper we confine ourselves to finite timed words with both weakly and strictly monotonic time, but the results straightforwardly carry over to infinite words too. 

In their pioneering studies \cite{AH90,AH93,AH94}, Alur and Henzinger  investigated the expressiveness and decidability properties of  timed logics $\mtlus$ and $\tptlus$. They showed that $\mtlus$ can be easily translated into $\tptlus$. Further, they conjectured, giving an intuitive example, that $\tptlus$ is more expressive than $\mtlus$ (see \cite{AH93} section 4.3). Fifteen years later, in a seminal paper, Bouyer \emph{et al} \cite{BCM05} formally proved that the purely future time logic $\tptlu$ is strictly more expressive than $\mtlu$  and that $\mtlus$ is more expressive than $\mtlu$, for both pointwise and continuous time. In this paper, we complete the picture by proving the original conjecture of Alur and Henzinger for the full logic $\mtlus$  with both future and past over pointwise time. 

In their full generality, $\mtlus$ and $\tptlus$ are both undecidable even for finite timed words. Several restrictions 
have been proposed to get decidable sub-logics (see \cite{OW08} for a recent survey). Thus, Bouyer \emph{et al.} \cite{BMOW08} introduced $\bmtlus$ with ``bounded'' intervals and showed that its satisfiability is $\mathit{EXPSPACE}$-complete. Alur and Henzinger argued, using reversal bounded 2-way deterministic timed automata $\mathit{RB2DTA}$, that the
logic $\mitlus$ permitting only non-singular (or non-punctual) intervals was decidable with $\mathit{EXPSPACE}$ complexity \cite{AH92,AFH96}. Unary modalities have played a special role in untimed logics \cite{EVW}, and we also consider unary fragments $\mtlfp$ and $\tptlfp$ in our study. Further sub-classes  can be obtained by combining the restrictions of bounded or non singular intervals and unary modalities. 
Decidable fragments of $\tptlus$ are less studied but two such logics can be found in \cite{PW09,PS10}.

In this paper, we mainly compare the expressive powers of various real-time temporal logics.
As our main tool we define an \emph{$m$-round MTL EF game}   with  ``until'' and ``since'' moves on two given timed words. As usual, the EF theorem equates the inability of any 
$\mtlus$ formula with modal depth $m$ from distinguishing two timed words to the existence of a winning strategy for the duplicator in $m$-round games.  Our EF theorem is actually parametrized by a  permitted set of time intervals, and it can be used for proving the lack of expressiveness of various fragments of $\mtlus$.

Classically, the EF Theorem has been a useful tool for proving limitations in
expressive power of first-order logic \cite{IK89,WT93}.
In their well-known paper, Etessami and Wilke \cite{EW96} adapted this to the LTL EF games to show the existence of the ``until'' hierarchy  in LTL definable languages. 
Our $\mathit{MTL}$ EF theorem is a generalization of this
to the timed setting.  We find that the use of EF theorem often leads to simple game theoretic  proofs of seemingly difficult questions about expressiveness of timed logics. The paper contains several examples of such proofs.

Our main expressiveness results are as follows. We show these results for finite timed words with weakly and strictly monotonic time. However, we remark that these results straightforwardly carry over to infinite timed words.
\begin{itemize}
\item We show that logics  $\bmtlus$, $\mitlus$ and $\mtlfp$ are all pairwise
incomparable.  These results indicate that the restrictions of boundedness, non-punctuality, and unary modalities are all semantically ``orthogonal'' in context of $\mathit{MTL}$.
\oomit{
Of these, logics $\bmtlus$ and $\mitlus$ form the currently known decidability frontier for timed temporal logics\footnote{In this paper, we confine ourselves to logics which are boolean closed. Extensions merging $\bmtlus$ with $\mitlus$ have been formulated as Flat, Co-Flat MTL and shown to be decidable \cite{BMOW08} and so is positive TPTL \cite{PW09}. However, these logics are not boolean closed.} where as $\ttlxy$ is a decidable fragment of the freeze logic
\tptlfp.
}
\item As one of our main results,
we show that the unary and future fragment $\tptlf$ of the freeze logic $\tptlus$ is not expressively contained within $\mtlus$ for both strictly monotonic and weakly monotonic timed words. Thus, $\mtlus$ is a strict subset of $\tptlus$ for pointwise time, as originally 
conjectured by Alur and Henzinger almost 20 years ago \cite{AH90, AH93, BCM05}.
\item It is easy to show that for strictly monotonic timed words, logic $\tptlus$ can be translated to the unary fragment $\tptlfp$ and for expressiveness the two logics coincide. For weakly monotonic time, we show that $\mtlus$ and $\tptlfp$ are expressively incomparable.
\end{itemize}

In the second part of this paper, we explore the expressiveness of  a recently proposed ``deterministic'' and ``unary'' fragment of $\tptlfp$ called $\ttlxy$. This is an interesting 
logic with exact automaton characterization as partially ordered two way deterministic timed automata \cite{PS10}. Moreover, by exploiting the properties of these automata, 
the logic has been shown to have NP-complete satisfiability.  
The key feature of this logic is the ``unique parsing'' of each timed word
against a given formula. Our main results on the expressiveness of  $\ttlxy$  are as follows.
\begin{itemize}
\item By an explicit reduction, we show that $\ttlxy$ is contained within the unary and non-punctual logic  $\mitlfp$.
The containment holds in spite of the fact that $\ttlxy$ can have freeze quantification and  punctual constraints (albeit only occurring deterministically).
\item Using the unique parsability of $\ttlxy$, we show that neither $\mitlfp$  nor $\bmtlfp$ are
expressively contained within $\ttlxy$.
\end{itemize}
Thus, the full logic $\tptlus$ is more expressive than $\mtlus$.
But its unary fragment with deterministic freezing,  $\ttlxy$, lies strictly within the unary and 
non-punctual logic $\mitlfp$. In our recent work \cite{PS11}, we have also shown by explicit reduction that the bounded fragment $\bmitlus$ is strictly contained within $\ttlxy$.  
Figure \ref{fig:concl} provides a succinct pictorial representation of all the expressiveness results achieved.

\begin{figure}[ht]
\begin{tikzpicture}[scale=0.7,transform shape]
\draw (6,18) node (A) [rectangle, draw] {\mtlus};
\draw (12,23) node (B) [rectangle,draw] {\tptlus};
\draw (12,20) node (C) [rectangle, draw] {\tptlfp};
\draw (7,15) node (F) [rectangle,draw] {\mtlfp};
\draw (2,15) node (G) [rectangle,draw] {\bmtlus};
\draw (14,15) node (H) [rectangle,draw] {\mitlus$\equiv$ RECA};
\draw (13,12) node (I) [rectangle,draw] {\mitlfp};
\draw (7,12) node (J) [rectangle,draw] {\bmitlus};
\draw (3,12) node (K) [rectangle, draw] {\bmtlfp};
\draw (10.5,10) node (E) [rectangle, draw] {\ttl$\equiv$\potdta};
\draw (7,7) node (L) [rectangle,draw] {\bmitlfp};

\draw (2,23) node (W) [rectangle, draw] {$A$};
\draw (4,23) node (X) [rectangle, draw] {$B$};
\draw (6,23) node{$\Rightarrow$  $B\subset A$ (strict subset)};
\draw (2,22) node (Y) [rectangle, draw] {$A$};
\draw (4,22) node (Z) [rectangle, draw] {$B$};
\draw (5,22) node{$\Rightarrow$  $B\not\subseteq A$};

\draw (W)[arrows= -triangle 45] -- (X);
\draw (Y)[arrows= -triangle 45,dashed] -- (Z);

\draw(B)[arrows= -triangle 45]-- (A);
\draw(B)[arrows= -triangle 45]-- (C);
\draw (C)[arrows= -triangle 45] -- (E);
\draw (E)[arrows= -triangle 45]-- (L);
\draw (A)[arrows= -triangle 45]-- (G);
\draw (A)[arrows= -triangle 45]-- (H);
\draw (A)[arrows= -triangle 45]-- (F);
\draw (H)[arrows= -triangle 45]-- (I);
\draw (F)[arrows= -triangle 45]-- (I);
\draw (C)[arrows= -triangle 45]-- (F);
\draw (G)[arrows= -triangle 45]-- (J);
\draw (H)[arrows= -triangle 45]-- (J);
\draw (G)[arrows= -triangle 45]-- (K);
\draw (F)[arrows= -triangle 45]-- (K);
\draw (J)[arrows= -triangle 45]-- (L);
\draw (K)[arrows= -triangle 45]-- (L);
\draw (I)[arrows= -triangle 45]-- (E);

\draw (E)[arrows= -triangle 45,dashed] --  (K);
\draw (G)[arrows= -triangle 45,dashed] -- (I);
\draw (F)[arrows= -triangle 45,dashed] --  (J);
\draw (H)[arrows= -triangle 45,dashed] -- (K);
\draw (G)[arrows= -triangle 45,dashed] .. controls (0,11) .. (E);
\draw (A)[arrows= -triangle 45,dashed] .. controls (8,18) .. (C);
\draw (C)[arrows= -triangle 45,dashed] .. controls (8,19) .. (A);

\end{tikzpicture}
\caption{Expressiveness of Timed Logics for Pointwise Time}
\label{fig:concl}
\end{figure}

The rest of the paper is organized as follows. Section \ref{sec:timedlogic} defines various timed logics. The $MTL$ EF games and the EF Theorem are given in Section \ref{sec:efgames}. Section \ref{sec:mtlsublogic} explores the relative expressiveness of various fragments of $\mtlus$ and
the subsequent section compares $\tptlus$ to $\mtlus$. Section \ref{sec:ttlmtl} studies the expressiveness of $\ttlxy$ relative to sub logics
of $\mtlus$.  

\section{Timed Temporal Logics: Syntax and Semantics}
\label{sec:timedlogic}
We provide a brief introduction of the logics whose expressiveness is investigated in this paper.
\subsection{Preliminaries}
Let  $\mathbb{R}, \mathbb{Z}$ and $\mathbb{N}$ be the set of reals, rationals, integers, and natural numbers, respectively and $\rlpos$ be the set of non-negative reals. An \textit{interval} is a convex subset of $\rlpos$, bounded by non-negative integer constants or $\infty$. The left and right ends of an interval may be open ( "(" or ")" ) or closed ( "[" or "]" ). We denote by $ \langle x,y \rangle$ a generic interval whose ends may be open or closed.  
An interval is said to be \textit{bounded} if it does not extend to infinity. It is said to be \textit{singular} if it is of the form $[c,c]$ for some constant $c$, and non-singular (or non-punctual) otherwise. 
We denote by $\zintv$ all the intervals (including singular
intervals $[c,c]$ and unbounded intervals $[c,\infty)$), by $\extintv$ the set of all non-punctual (or extended) intervals, and by $\bintv$ the set of all bounded intervals.
Given an alphabet $\Sigma$, its elements are used also as atomic propositions in logic, i.e.
the set of atomic propositions $\ap = \Sigma$.

A finite timed word is a finite sequence  $\rho = (\sigma_1,\tau_1), (\sigma_2,\tau_2), \cdots ,(\sigma_n,\tau_n)$, of event-time stamp pairs 
such that the 
sequence of time stamps is non-decreasing: $\forall i<n ~.~ \tau_i\leq \tau_{i+1}$. This gives weakly monotonic timed words. If
time stamps are strictly increasing, i.e. $\forall i<n ~.~ \tau_i< \tau_{i+1}$, the word is strictly monotonic. The length of $\rho$ is denoted by $\#\rho$, and $dom(\rho) = \{1,...\#\rho\}$. For convenience, we assume  that $\tau_1 = 0$ as this simplifies the treatment of ``freeze'' logics. The timed word $\rho$ can alternately be represented as 
$\rho=(\overline{\sigma},\overline{\tau})$ with $\overline{\sigma} = \sigma_1, \cdots, \sigma_n$ and 
$\overline{\tau} = \tau_1, \cdots,\tau_n$. Let $untime(\rho)=\overline{\sigma}$. We shall use the two representations interchangeably.
Let $T\Sigma^*$ be the set of timed words over the alphabet $\Sigma$.

\subsection{Metric Temporal Logics}  
The logic MTL extends Linear Temporal Logic by adding timing constraints to the "Until" and "Since" modalities of LTL. We parametrize this logic by a 
permitted set of intervals $\INTV$  and denote the resulting logic as
$\INTV\mtlus$. Let $\phi$ range over $\INTV\mtlus$ formulas, $a \in \Sigma$ and
$I \in \INTV$.  The syntax of $\INTV\mtlus$ is as follows:
\[
\phi ::= a ~\mid~ \phi \land\phi ~\mid~ \neg \phi ~\mid~ \phi\until_I\phi ~\mid~ \phi \since_I\phi
\]
Let $\rho=(\overline{\sigma},\overline{\tau})$ be a timed word and let $i\in dom(\rho)$. The semantics of \mtlus\/ formulas is as below:
\[
\begin{array}{rcl}
\rho,i\models a & \fif & \sigma_i=a\\
\rho,i\models \neg \phi & \fif & \rho,i\not\models \phi\\
\rho,i\models \phi_1\lor \phi_2 & \fif & \rho,i\models\phi_1 ~\mbox{or} ~\rho,i\models\phi_2\\
\rho,i\models \phi_1 \until_I \phi_2 & \fif & \exists j>i .~ \rho,j\models\phi_2 
~\mbox{and}~  \tau_j-\tau_i\in I\\
       &  &    ~\mbox{and}~ \forall i<k<j.~ \rho,k \models \phi_1 \\
\rho,i\models \phi_1 \since_I \phi_2 & \fif & \exists j<i ~.~ \rho,j\models\phi_2 
~\mbox{and}~ \tau_i-\tau_j\in I  \\
       &  &  ~\mbox{and}~ \forall j<k<i.~ \rho,k \models \phi_1 
\end{array}
\]
The language of an \INTV\mtlus\/ formula $\phi$ is given by $\mathcal L(\phi) = 
\{\rho ~\mid~ \rho, 1 \models \phi\}$.
Note that we use the "strict" semantics of $\until_I$ and $\since_I$ modalities.
We can define unary \textit{"future"} and \textit{"past"} modalities as: $\fut_I\phi := \top \until_I\phi$ and $\past_I\phi := \top \since_I \phi$. The subset of $\INTV\mtlus$ using only
these modalities is called $\INTV \mtlfp$. We can now define various well known variants of $MTL$.

\begin{itemize}
\item Metric Temporal Logic \cite{AH90,AH93}, denoted  $\mtlus = \zintv\mtlus$. This is obtained by choosing the set of intervals $\INTV = \zintv$. 
\item Unary $\mathit{MTL}$, denoted  \mtlfp = $\zintv\mtlfp$ uses only unary modalities. It is a
timed extension of the untimed unary temporal logic $UTL$ studied by \cite{EVW}.
\oomit{
\cite{EVW} defined a Unary temporal logic (UTL) as a sub logic of LTL with unary until and since modalities. We extend this to time and define a sub logic of $\mathit{MTL}$.
} 

\item Metric Interval Temporal Logic \cite{AFH96}, denoted $\mitlus = \extintv\mtlus$. In this logic, the timing constraints in the formulas are restricted to non-punctual (non-singular) intervals. \mitlfp\/ is \mitlus\/ confined to the unary modalities $F_I$ and $P_I$.

\item Bounded $\mathit{MTL}$\cite{BMOW08}, denoted $\bmtlus = \bintv\mtlus$. Other logics can be obtained as intersections of the above logics. Specifically, the logics  \bmtlfp, \bmitlus\/, and \bmitlfp\/ are defined  respectively as $\bintv\mtlfp$, $\bextintv\mtlus$, and $\bextintv\mtlfp$.

\item Let $\zintv^k$ denote the set of all intervals of the form $\langle i,j \rangle$ or 
$\langle  i,\infty)$, with $i,j \leq k$. Let $\bintv^k$ denote the set of all bounded 
(i.e. non-infinite) $\zintv^k$  intervals.
Then $\mtlus^k$ and $\bmtlus^k$ are respectively the logic  $\zintv^k\mtlus\/$ and
$\bintv^k\mtlus$. Also, given an  \mtlus\/ formula $\phi$, let $\maxint(\phi)$ denote the maximum integer constant (apart from $\infty$) appearing in its interval constraints.
\end{itemize}

\subsection{Freeze Logics}
These logics specify timing constraints by conditions on special variables, called freeze variables which memorize the time stamp at which a subformula is evaluated. Let $\mathcal X$ be a finite set of freeze variables. Let $x\in \mathcal X$ and let 
$\nu: \mathcal X\to \rlpos$ be a valuation which assigns a non-negative real number to each freeze variable. Let $\nu_0$ be the initial valuation such that
$\forall x ~.~ \nu_0(x) = 0$ and let 
$\nu (x\leftarrow r)$ denote the valuation such that $\nu(x \leftarrow r)(x) = r$ and $\nu(x \leftarrow r)(y)= \nu(y)$ if $x \not= y$.\\
A \emph{timing constraint} $g$ in freeze logics has the form:\\
\hspace*{1cm} $ g ~:=~ g_1 \land g_2 ~\mid~ x-T\approx c $ where $\approx\in \{<,\leq, >,\geq,=\}$ and $c\in\mathbb{Z}$.\\
Let $\nu,t \models g$ denote that the timing constraint $g$ evaluates to $\mathit{true}$ in valuation $\nu$ with $t \in \rlpos$ assigned to the variable $T$.

\paragraph{\tptlus} given by \cite{AH94,R99}, is an extension of LTL with freeze variables.
Let $g$ be a guard as defined above. The syntax of a $\tptlus$ formula $\phi$ is  as follows:
\[
 \phi := a ~\mid~ g ~\mid~  \phi\until\phi ~\mid~ \phi\since\phi ~\mid~ x. \phi 
 ~\mid~ \phi \lor\phi ~\mid~ \neg\phi
\]
The semantics of \tptlus\/ formulas over a timed word $\rho$ with 
$i\in dom(\rho)$ and valuation $\nu$ is as follows. The boolean connectives have their usual meaning.\\
\hspace*{1cm}
\begin{tabular}{rcl}
$\rho,i,\nu\models a$ & $\fif$ & $\sigma_i=a$\\
$\rho,i,\nu\models \phi_1\until\phi_2$ & $\fif$ & $\exists j>i ~.~ \rho,j,\nu\models\phi_2
 ~\mbox{and}~ 
     \forall i<k<j ~.~ \rho,k,\nu\models\phi_1$\\
$\rho,i,\nu\models \phi_1\since\phi_2$ & $\fif$ & $\exists j<i ~.~ \rho,j,\nu\models\phi_2 
~\mbox{and}~ 
     \forall j<k<i ~.~ \rho,k,\nu\models\phi_1 $\\
$\rho,i,\nu\models x.\phi$ & $\fif$ & $\rho,i,\nu(x\to\tau_i)\models \phi$\\
$\rho,i,\nu\models g$ & $\fif$ & $\nu,\tau_i \models g$ \\
\end{tabular} \\
The language defined by a \tptlus\/ formula $\phi$ is given by $L(\phi) = \{\rho ~\mid~ \rho,1,\nu_0\models\phi\}$. Also, \tptlfp\/ is the unary sub logic of \tptlus.

\paragraph{Deterministic Freeze Logic} \ttl\/ is a sub logic of \tptlus.
A \textit{guarded event} over an alphabet $\Sigma$ and a finite set of freeze variables $\mathcal X$ is a pair $\theta = (a,g)$ where $a \in \Sigma$ is an event and $g$ is a timing constraint over $\mathcal X$ as defined before.
Logic \ttl\/ uses the deterministic modalities $X_\theta$ and $Y_\theta$ which access the position with the \textit{next} and \textit{previous} occurrence of a guarded event, respectively. This is the timed extension of logic $TL[X_a,Y_a]$ \cite{DKL10} using freeze quantification. The syntax of a \ttl\/ formula $\phi$ is as follows: \\
$
\hspace*{1cm}
\phi := \top ~\mid~ \theta ~\mid~ SP\phi ~\mid~EP \phi ~\mid~ X_\theta \phi ~\mid~ Y_\theta \phi ~\mid~ x.\phi ~\mid~ \phi\lor\phi ~\mid~ \neg\phi
$ \\
The semantics of \ttl\/ formulas over timed words is as given below. $\top$ denotes the formula $\mathit{true}$. This and the boolean operators have their usual meaning. 
\[
\begin{array}{rcl}
\rho,i,\nu\models \theta & \fif & \sigma_i = a ~\mbox{and}~ \nu,\tau_i\models g ~~\mbox{where}~ \theta=(a,g) \\
\rho,i,\nu\models SP \phi & \fif & \rho,1,\nu\models \phi\\
\rho,i,\nu\models EP \phi & \fif & \rho,\#\rho,\nu\models \phi\\
\rho,i,\nu\models X_\theta \phi & \fif & \exists j>i ~.~ \rho,j,\nu\models \theta 
    ~\mbox{and}~
    \forall i<k<j . \\
   & & \quad \rho,k,\nu \not\models\theta    ~\mbox{and}~ \rho,j,\nu\models  \phi\\
\rho,i,\nu\models Y_\theta \phi & \fif & \exists j<i ~.~ \rho,j,\nu\models \theta   ~\mbox{and}~ \forall j<k<i . \\
   & & \quad \rho,k,\nu \not\models\theta   ~\mbox{and}~ \rho,j,\nu\models \phi\\
\rho,i,\nu\models x.\phi & \fif & \rho,i,\nu(x\leftarrow \tau_i)\models \phi
\end{array}
\]



\section{EF Games for \INTV\mtlus}
\label{sec:efgames}
We extend the LTL EF games of \cite{EW96} to timed logics,
and use these to compare expressiveness of various instances of the
generic logic $\INTV\mtlus$.
\oomit{
Recall that we have defined a generic logic $\INTV\mtlus$, and, by restricting the set of permitted intervals $\INTV$ we obtained various sub logics. 
The EF games here are also parametrized by the set of intervals $\INTV$.
Yet another restriction is to use the unary operators $F_I$ and $P_I$ in place of $\until_I$ and $\since_I$ giving logics $\INTV \mtlfp$. The EF games are adapted to this restriction too.
}
Let $\INTV$ be a given set of intervals. A $k$-round \INTV\mtlus-EF game is played between 
two players, called \ssp\/ and \ddp, on a pair of timed words 
$\rho_0$ and $\rho_1$.  
A configuration of the game (after any number of rounds) is a pair of positions $(i_0,i_1)$ with $i_0 \in dom(\rho_0)$ and $i_1 \in dom(\rho_1)$. A configuration is called partially isomorphic, denoted $isop(i_0,i_1)$ iff $\sigma_{i_0}=\sigma_{i_1}$. 

The game is defined inductively on $k$ from a starting configuration $(i_0,i_1)$ and results in either the \ssp\/ or \ddp\/ winning the game.
The \ddp\/ wins the $0$-round game  iff  $isop(i_0,i_1)$. The $k+1$ round game is played by first playing one round from the starting position. Either the spoiler wins in this round (and the game is terminated) or the game results into a new configuration $(i_0',i_1')$. The game then proceeds inductively with $k$-round play from the configuration $(i_0',i_1')$. The \ddp\/ wins the game only if it wins every round of the game.
We now describe one round of play from a starting configuration $(i_0,i_1)$.
\begin{itemize}
 \item At the start of the round, if $\neg isop(i_0,i_1)$ then the \ssp\/ wins the game and the game is terminated. Otherwise,
 \item The \ssp\/ chooses one of the words by choosing $\delta \in \{0,1\}$. Then $\deltabar=(1-\delta)$ gives the other word.
 The \ssp\/ also chooses either an $\until_I$-move or a $\since_I$ move, including an interval $I \in \INTV$.
 The remaining round is played in two parts.
\end{itemize}
\noindent{\em $\until_I$ Move}
\begin{itemize}
\item \textit{Part I:} The \ssp\/ chooses a position $i'_\delta$ such that $i_\delta < i'_\delta \leq \#\rho_\delta$ and $(\tau_\delta[i'_\delta] - \tau_\delta[i_\delta]) \in I$.
\item The \ddp\/ responds\footnote{The \ddp\/ can make use of the knowledge of $I$ to choose his move. This is needed as illustrated in the proof of Theorem \ref{theo:mtlfrag2}.}
by choosing a position $i'_\deltabar$ in the other word s.t. ${i}_\deltabar < i'_\deltabar \leq \#\rho_\deltabar$ and $(\tau_\deltabar[i'_\deltabar] - \tau_\deltabar[i_\deltabar]) \in I$. 
If the \ddp\/ cannot find such a position, the \ssp\/ wins the game. Otherwise the play continues to Part II.
\item \textit{Part II}: \ssp\/ chooses to play either $F$-part or $U$-part.
\begin{itemize}
\item $F$-part: the round ends with configuration $(i'_0,i'_1)$.
\item $U$-part: \ssp\/ verifies that $i'_{\delta}-i_\delta=1$ iff $i'_{\deltabar}-i_{\deltabar}=1$
and \ssp\/ wins the game if this does not hold. Otherwise \ssp\/ checks whether $i'_{\delta}-i_\delta=1$. If \emph{yes}, the round ends with configuration $(i'_0,i'_1)$. If \emph{no}, 
\ssp\/ chooses a position $i''_\deltabar$ in the other word such that 
$i_\deltabar < i''_\deltabar < i'_\deltabar$. The \ddp\/ responds by choosing
$i''_\delta$ such that $i_\delta < i''_\delta < i'_\delta$. 
The round ends with the configuration $(i''_0,i''_1)$.
\end{itemize}
\end{itemize}
\noindent{\em $\since_I$ Move} This move is symmetric to $\until_I$ where the \ssp\/ chooses positions $i'_\delta$ as well as $i''_\deltabar$ in ``past'' 
and the \ddp\/ also responds accordingly. In Part II, the \ssp\/ will a have choice of 
$P$-part or $S$-part. We omit the details. This completes the description of the game.

\begin{definition}
Given two timed words $\rho_0,\rho_1$ and $i_0\in dom(\rho_0), i_1\in dom(\rho_1)$, we define 
\begin{itemize}
\item $(\rho_0,i_0) \gameeq{k}^{\INTV} (\rho_1,i_1)$ iff for every $k$-round \INTV \mtlus\/  EF-game over the words $\rho_0,\rho_1$ and starting from the configuration $(i_0,i_1)$, the \ddp\/ always has a winning strategy.
\item $(\rho_0,i_0) \formeq{k}^{\INTV} (\rho_1,i_1)$ iff for every \INTV\mtlus\/ formula $\phi$ of operator depth $\leq k$, $\rho_0,i_0 \models \phi \Leftrightarrow \rho_1,i_1\models \phi$.
\qed
\end{itemize}
\end{definition}
We shall now state the \INTV \mtlus\/ EF theorem. Its proof is a straight-forward extension of the proof of LTL EF theorem of \cite{EW96}. The only point of interest is that there is
no a priori bound on the set of intervals that a modal depth $n$ formula can use and hence the set of isomorphism types seems potentially infinite. However, given timed words $\rho_0$ and $\rho_1$, we can always restrict these intervals to not go beyond a constant 
$k$  where $k$ is the smallest integer larger than the biggest time stamps in $\rho_0$ and $\rho_1$. This restricts the isomorphism types to a finite cardinality.
The complete proof is given in detail in Appendix  \ref{app:efthmproof}.
\begin{theorem}
\label{thm:ef}
 $(\rho_0,i_0) \gameeq{k}^{\INTV} (\rho_1,i_1)$ if and only if $(\rho_0,i_0) \formeq{k}^{\INTV} (\rho_1,i_1)$ \qed
\end{theorem}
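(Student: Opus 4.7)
The plan is to proceed by induction on $k$, following the schema of Etessami and Wilke but with care for the timing constraints and the potentially infinite set $\INTV$.

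For the base case $k=0$, duplicator wins iff $isop(i_0,i_1)$, i.e.\ $\sigma_{i_0}=\sigma_{i_1}$, which is precisely equivalence with respect to depth-$0$ formulas (boolean combinations of atomic propositions $a \in \Sigma$). For the inductive step, the standard trick is to associate with each configuration $(\rho,i)$ a \emph{characteristic formula} $\chi^{\INTV}_{k+1}(\rho,i)$ of depth $k+1$ that is the conjunction, over every choice of side $\delta$, modality ($\until_I$ or $\since_I$), and interval $I$, of: (i) a formula asserting the existence of a witness position in the appropriate time-direction carrying the $k$-characteristic formula, and (ii) a formula expressing the universal condition on intermediate positions (coming from the $U$-/$S$-part of the game). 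The key observation that makes $\chi^{\INTV}_{k+1}$ a finite formula of depth $\leq k+1$ is that, for the two fixed words $\rho_0,\rho_1$ under consideration, only finitely many intervals from $\INTV$ are semantically distinct: one can bound attention to intervals with endpoints below $k^* := 1 + \max\{\tau_{\#\rho_0},\tau_{\#\rho_1}\}$, collapsing the rest to $[c,\infty)$ up to constant $k^*$. By induction, there are only finitely many equivalence classes under $\formeq{k}^{\INTV}$ on the positions of $\rho_0,\rho_1$, so the conjunctions and disjunctions are finite.

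For the $(\Leftarrow)$ direction, suppose the spoiler has a winning strategy from $(i_0,i_1)$. Either $\neg isop(i_0,i_1)$, in which case a depth-$0$ atomic formula distinguishes, or the spoiler plays, say, an $\until_I$-move on $\rho_0$ to position $j_0$; then either no valid response $j_1$ exists (distinguish by $\top \until_I \top$ of depth $1$), or for every legal response $j_1$ the spoiler wins the continuation. In the latter case, the induction hypothesis gives, for each such $j_1$, a formula $\psi_{j_1}$ of depth $k$ distinguishing $(\rho_0,j_0)$ from $(\rho_1,j_1)$; then $\chi^{\INTV}_k(\rho_0,j_0)$ (or an appropriate conjunction dictated by whether spoiler plays the $F$- or $U$-part) serves as the depth-$k$ witness, and wrapping it in $\top\until_I(\cdot)$ or $(\cdot)\until_I(\cdot)$ yields the desired depth-$(k+1)$ formula distinguishing $(\rho_0,i_0)$ from $(\rho_1,i_1)$. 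The $U$-part case requires the extra clause that the neighbourhood ($j-i=1$) is preserved, handled by using $\until_I$ rather than $\fut_I$ and by the matching check the spoiler performs in the game rules.

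For the $(\Rightarrow)$ direction, suppose duplicator wins $k+1$ rounds and $\rho_0,i_0\models \phi$ where $\phi = \phi_1\until_I\phi_2$ has depth $\leq k+1$; we must show $\rho_1,i_1\models\phi$. Let $j_0$ witness $\phi$ in $\rho_0$. Then the spoiler could choose the $\until_I$-move on $\rho_0$ with position $j_0$; since duplicator wins, there is a response $j_1$ with $\tau_{j_1}-\tau_{i_1}\in I$ such that $(\rho_0,j_0)\gameeq{k}^{\INTV}(\rho_1,j_1)$, and by IH $\rho_1,j_1\models \phi_2$. For the universal condition, use the $U$-part branch: if $j_0-i_0>1$, then duplicator's winning strategy forces $j_1-i_1>1$ as well, and for any intermediate $k_1$ the spoiler could have chosen, duplicator's response gives some $k_0$ with $(\rho_0,k_0)\gameeq{k}^{\INTV}(\rho_1,k_1)$, whence by IH $\rho_1,k_1\models\phi_1$. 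The $\since_I$-case is symmetric, and booleans are routine.

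The main obstacle, and the one point where the timed setting differs nontrivially from \cite{EW96}, is managing the interval $I$: duplicator's response must match $I$ exactly, not merely satisfy some coarser constraint, which is why the game rules let duplicator see $I$. Establishing that only finitely many intervals need be considered (so that characteristic formulas remain finite) is the second point requiring explicit justification; the bound $k^*$ above suffices.
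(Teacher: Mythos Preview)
Your proposal is correct and follows essentially the same approach as the paper: induction on $k$, the forward direction transfers satisfaction of $\until_I$/$\since_I$ formulas via the duplicator's strategy, and the backward direction (by contrapositive) builds a distinguishing formula from characteristic formulas of bounded modal depth, with the interval constants capped by the maximum time stamp of the two words (the paper isolates this last point as a separate lemma). The paper's explicit distinguishing formula is $(\phi^{k,m}_{i_\delta+1}\lor\cdots\lor\phi^{k,m}_{i'_\delta-1})\,\until_I\,\phi^{k,m}_{i'_\delta}$, which packages the $F$-part and $U$-part cases together; your sketch gestures at this but would benefit from writing it out, since a single $\until_I$ formula handles all duplicator responses simultaneously rather than one $\psi_{j_1}$ per response.
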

When clear from context, we shall abbreviate $\gameeq{k}^{\INTV}$ by $\gameeq{k}$ and
$\formeq{k}^{\INTV}$ by $\formeq{\INTV}$.
As temporal logic formulas are anchored to initial position $1$, define
$\rho_0 \formeq{k} \rho_1 \iff (\rho_0,1) \formeq{k} (\rho_1,1)$ and 
$\rho_0 \gameeq{k} \rho_1 \iff (\rho_0,1) \gameeq{k} (\rho_1,1)$. It follows from the EF Theorem that $\rho_0 \formeq{k} \rho_1$ if and only if $\rho_0 \gameeq{k} \rho_1$.

We can modify the $\INTV \mtlus$ EF game to match the sub logic $\INTV \mtlfp$.
An $\INTV \mtlfp$ game is obtained by the restricting $\INTV \mtlus$ game such that in PART II
of any round, the \ssp\/ always chooses an $F$-part or a $P$-part.
The corresponding $\INTV \mtlfp$ EF Theorem also holds.


\section{Separating sub logics of \mtlus\/}
\label{sec:mtlsublogic}
Each formula of a timed logic defines a timed language.
Let $\mathcal L(\mathcal G)$ denote the set of languages definable by the formulas of logic $\mathcal G$. A logic $\mathcal G_1$ is  at least as expressive as (or contains) logic $\mathcal G_2$ if $\mathcal L(G_2) \subseteq \mathcal L(G_1)$. 
This is written as $\mathcal G_2 \subseteq \mathcal G_1$. Similarly, we can define 
$\mathcal G_2 \subsetneq \mathcal G_1$ (strictly contained within), $\mathcal G_2 \not \subseteq \mathcal G_1$ (not contained within), $\mathcal G_2 \# \mathcal G_1$ (incomparable), and
$\mathcal G_2 \equiv \mathcal G_1$ (equally expressive).

We consider three sub logics of \mtlus\/ namely \mtlfp, \mitlus\/ and \bmtlus.These have fundamentally different restrictions and using their corresponding EF-games, we show that they are all incomparable with each other.\footnote{It was already observed by Bouyer \emph{et al} \cite{BMOW08} that \bmtlus\/ and \mitlus\/ have separate expressiveness.} 
\oomit{
In fact, we examine the sub-logics \bmitlus, \bmtlfp\/ and \mitlfp\/ and show that they are incomparable with \mtlfp, \mitlus\/ and \bmtlus, respectively. We only state the theorems here. Their game-based proofs may be found in Appendix B.
}

\begin{theorem} \label{theo:mtlfrag1}
$\mitlfp \nsubseteq \bmtlus$
\end{theorem}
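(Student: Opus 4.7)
The plan is to exhibit a language definable in $\mitlfp$ but not in $\bmtlus$, using the EF-game machinery of Section \ref{sec:efgames}. The witness will be $L = L(F_{[0,\infty)}\, b)$, the set of timed words in which some $b$-event occurs at a position $>1$; this language is $\mitlfp$-definable because the interval $[0,\infty)$ is non-punctual. The guiding intuition is that every $\bmtlus$ formula, being restricted to bounded intervals, can only probe a bounded time-horizon, whereas $F_{[0,\infty)}$ can reach arbitrarily far ahead in time.

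Suppose, for contradiction, that some $\phi \in \bmtlus$ defines $L$. Let $m$ be its modal depth and $k = \maxint(\phi)$, so that $\phi$ is a $\bintv^k\mtlus$ formula of depth $m$. Pick any $N > k$ and set $\rho_0 = (a,0)(b,N)$ and $\rho_1 = (a,0)$. Then $\rho_0 \in L$ (the $b$ at position $2$ witnesses membership) whereas $\rho_1 \notin L$. The key step is to show duplicator wins every $m$-round $\bintv^k\mtlus$ EF-game on $(\rho_0,\rho_1)$ starting from $(1,1)$. This configuration is partially isomorphic since both positions carry the letter $a$. From it, spoiler has no legal move: a $\since_I$ move is impossible (position $1$ has no predecessor in either word), a $\until_I$ move in $\rho_1$ is impossible (position $1$ is already the last), and the unique candidate $\until_I$ move in $\rho_0$, namely the jump to position $2$, requires $N \in I$ for some $I \in \bintv^k$, but every such $I$ is contained in $[0,k]$ while $N > k$. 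With no legal move available, spoiler cannot win the round; duplicator therefore wins it, and by induction wins the entire $m$-round game.

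Applying Theorem \ref{thm:ef}, it follows that $\rho_0 \formeq{m}^{\bintv^k} \rho_1$, so $\phi$ cannot distinguish $\rho_0$ from $\rho_1$, contradicting the assumption that $\phi$ defines $L$. Hence $L \notin \bmtlus$, while $L \in \mitlfp$, giving $\mitlfp \nsubseteq \bmtlus$. The only subtlety is the convention that a spoiler with no legal move loses the round; this is implicit in the EF-game rules of Section \ref{sec:efgames}, since the winning conditions for spoiler require either reaching a non-isomorphic configuration or posing an unanswerable challenge, and neither is available here. Beyond this, the argument is a clean application of the EF theorem to a minimal, transparent pair of words, requiring no delicate combinatorial bookkeeping.
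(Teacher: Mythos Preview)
Your proof is correct and takes a genuinely simpler route than the paper's.  The paper uses the witness formula $\fut_{[0,\infty)}(a\land \fut_{(1,2)} c)$ together with families $\aaa_n,\bbb_n$ whose untimed content is $a^{n+1}c$, with the $a$'s at integer timestamps $0,\ldots,n$ and the $c$ beyond them; in an $m$-round $\bintv^k\mtlus$ game with $n=mk$, \ssp\ can advance at most $k$ positions per round and therefore never reaches the distinguishing $c$, so \ddp\ wins by copying.  You instead use the minimal witness $\fut_{[0,\infty)}b$ and a two-event versus one-event pair in which the unique $b$ sits past the horizon $k$, so that \ssp\ has no legal first move at all.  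Your argument dispenses with all the counting; the paper's construction, on the other hand, exercises the EF game in a non-degenerate way (and the paper later reuses the same formula $\fut_{[0,\infty)}(a\land \fut_{(1,2)} c)$ in Section~\ref{sec:ttlmtl} to separate $\mitlfp$ from $\ttl$, where your simpler $\fut_{[0,\infty)}b$ would \emph{not} work since $\ttl$ can express it via $X_{(b,\top)}\top$).

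One remark on the subtlety you flag.  The game description in Section~\ref{sec:efgames} indeed does not explicitly cover the case where \ssp\ has no legal Part~I move, so relying on the convention is slightly informal.  You can sidestep the game entirely: at position $1$ in either of your words, every subformula of the form $\phi_1\until_I\phi_2$ or $\phi_1\since_I\phi_2$ with $I\in\bintv^k$ evaluates to $\mathit{false}$ (there is no candidate witness position), so any $\bintv^k\mtlus$ formula collapses to a boolean combination of atomic propositions, which evaluates identically since both first letters are $a$.  This gives $(\rho_0,1)\formeq{m}^{\bintv^k}(\rho_1,1)$ directly, without invoking Theorem~\ref{thm:ef}.
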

\begin{proof}
Consider the \mitlfp\/ formula $\phi:= \fut_{[0,\infty)}(a\land \fut_{(1,2)} c)$.
Consider a family of words $\aaa_n$ and $\bbb_n$.
We have $untime(\aaa_n) = untime(\bbb_n) = a^{n+1}c$ with the $a$'s occurring at integral time stamps $0,1, \ldots, n$ in both words. In $\aaa_n$, the letter $c$ occurs at time $n+2.5$ and hence time distance between any $a$ and $c$ is more than $2$. In $\bbb_n$, the $c$ occurs at time $n+1.5$ and the time distance between the $c$ and the preceding $a$  is in $(1,2)$. Clearly,  $\aaa_n \not \models \phi$ whereas $\bbb_n \models \phi$ for any $n>0$.

We prove the theorem using an $m$-round $\bintv^k\mtlus$ EF game on the words $\aaa_n$ and $\bbb_n$ where  $n = mk$. We show that \ddp\/ has a winning strategy.
Note that in such a game the \ssp\/ is allowed to choose intervals at every round with maximum upper bound of $k$ and hence can shift the pebble at most $k$ positions to the right. It is easy to see that the \ssp\/ is never able to place a pebble on the last $c$. Hence, the \ddp\/ has a winning strategy 
where she exactly copies the \ssp\/ moves. Using the EF theorem,
we conclude that no modal depth $n$ formula of logic $\bintv^k\mtlus$ can separate the words
$\aaa_n$ and $\bbb_n$. Hence,  there doesn't exist a \bmtlus\/ formula giving the language  $L(\phi)$. 
\oomit{
\begin{figure}
\begin{tikzpicture}
\draw (2,2) node{$\mathcal A_n$};
\draw(3,2.5) node{};\draw (4,2.5) node{a};\draw (5,2.5) node{a};\draw (7,2.5) node{a}; \draw (8,2.5) node{a}; \draw (10.5,2.5) node{b}; \draw (11,2.5) node{b};\draw (12,2.5) node{b};
\draw(3,2) node{I}--(4,2) node{I}--(5,2) node{I};
\draw (5,2)[dashed]--(7,2) node{I};
\draw (7,2)--(8,2) node{I}--(9,2) node{I}--(10,2) node{I}--(10.5,2) node{l}-- (11,2) node{I}--(12,2) node{I};
\draw (12,2)[dashed]-- (13,2) node{};
\draw(3,1.5) node{0};\draw (4,1.5) node{1};\draw (5,1.5) node{2};\draw (7,1.5) node{$n$}; \draw (8,1.5) node{$n+1$}; \draw (9,1.5) node{$n+2$};\draw (10,1.5) node{$n+3$}; \draw (11,1.5) node{$n+4$};\draw (12,1.5) node{$n+5$};

\draw (2,0.5) node{$\mathcal A_n$};
\draw(3,1) node{};\draw (4,1) node{a};\draw (5,1) node{a};\draw (7,1) node{a}; \draw (8,1) node{a};  \draw (11,1) node{b};\draw (12,1) node{b};
\draw(3,0.5) node{I}--(4,0.5) node{I}--(5,0.5) node{I};
\draw (5,0.5)[dashed]--(7,0.5) node{I};
\draw (7,0.5)--(8,0.5) node{I}--(9,0.5) node{I}--(10,0.5) node{I}-- (11,0.5) node{I}--(12,0.5) node{I};
\draw (12,0.5)[dashed]-- (13,0.5) node{};
\draw(3,0) node{0};\draw (4,0) node{1};\draw (5,0) node{2};\draw (7,0) node{$n$}; \draw (8,0) node{$n+1$}; \draw (9,0) node{$n+2$};\draw (10,0) node{$n+3$}; \draw (11,0) node{$n+4$};\draw (12,0) node{$n+5$};
\end{tikzpicture}
\caption{$\mitlfp \nsubseteq \bmtlus$}
\label{mitlfp:bmtlus}
\end{figure}
}
\qed
\end{proof}

\begin{theorem} \label{theo:mtlfrag2}
$\bmtlfp \nsubseteq \mitlus$
\end{theorem}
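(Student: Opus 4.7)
The plan is to separate $\bmtlfp$ from $\mitlus$ by a $\bmtlfp$ formula that uses a bounded punctual constraint (which $\mitlus$ lacks) together with a word-family for which the duplicator wins the $\mitlus$ EF game. By Theorem \ref{thm:ef} this implies the formula's language is not expressible in $\mitlus$. The structure mirrors that of Theorem \ref{theo:mtlfrag1}: the separating formula exploits the feature (punctuality) absent from the target logic, and the words are designed so that the only $\mitlus$-visible information is oblivious to this feature.

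Concretely, I propose the $\bmtlfp$ formula
\[
\phi \;:=\; \fut_{[0,K]}\bigl(b \wedge \past_{[1,1]}\,a\bigr)
\]
for a suitable fixed bound $K$: ``there is a $b$ preceded by an $a$ at time-distance exactly $1$''. The witnessing families $\aaa_n$ and $\bbb_n$ each consist of a distinguished anchor $a$-event, a distinguished $b$-event, and a dense regular cluster of filler events. All time stamps are placed at non-integer offsets chosen so that no event of either word sits on an integer boundary of any $\extintv^k$-interval. In $\aaa_n$ the distinguished $b$ lies at time-distance exactly $1$ from the anchor $a$, so $\aaa_n \models \phi$; in $\bbb_n$ the $b$ is perturbed by $\pm\tfrac{1}{n}$, so no $a$--$b$ pair is at distance exactly $1$ and $\bbb_n \not\models \phi$. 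The untimed projection and the number of events are kept identical across the two words.

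For the duplicator's strategy in the $m$-round $\extintv^k\mtlus$ EF game, I maintain a bijection between the positions of the two words that differs from the identity only by a controlled $\pm 1$ swap around the perturbed $b$-position. At each round, when the spoiler declares an interval $I \in \extintv^k$ and picks a position in one word, the duplicator responds in the other word either by copying the index or by applying the swap -- the choice depending on $I$, using the freedom granted by the footnote after the $\until_I$-move. Because every $I \in \extintv^k$ is non-punctual with integer endpoints and every event avoids integer-time neighbourhoods, the condition $\tau \in I$ is preserved by the $\tfrac{1}{n}$-perturbation for $n$ large enough. The $U$-part (and $S$-part) adjacency checks are preserved since both words share the same ordered list of time stamps modulo the tiny perturbation.

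The main obstacle is ensuring that the perturbation remains ``invisible'' to $\extintv^k$: one must arrange that no time-difference between any two events of $\aaa_n$ and $\bbb_n$ equals an integer endpoint of any admissible $I$, so that a $\pm\tfrac{1}{n}$ shift never flips interval membership. This is discharged by positioning the anchor, the $b$ and the filler grid at carefully chosen non-integer offsets and taking $n > 2mk$, after which a routine case split over the four shapes $(p,q)$, $[p,q)$, $(p,q]$, $[p,q]$ of $I$ verifies the invariant round-by-round. Applying the $\mitlus$ EF theorem then shows that no fixed $\mitlus$ formula of modal depth $m$ and maximum constant $k$ distinguishes $\aaa_n$ from $\bbb_n$ for all large $n$, so $L(\phi) \notin \mitlus$ and hence $\bmtlfp \nsubseteq \mitlus$.
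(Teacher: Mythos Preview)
Your high-level plan---a punctual $\bmtlfp$ formula, two word-families differing by a tiny time-perturbation, and a duplicator winning strategy in the $\extintv\mtlus$ game---is exactly the paper's approach. However, your concrete construction contains an internal contradiction that makes the duplicator strategy fail.

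You require $\aaa_n\models\phi$, so the distinguished $a$--$b$ pair in $\aaa_n$ is at time-distance \emph{exactly} $1$. A few lines later you assert ``one must arrange that no time-difference between any two events of $\aaa_n$ and $\bbb_n$ equals an integer endpoint of any admissible $I$''. These two requirements are incompatible: the very pair witnessing $\phi$ has integer difference $1$. And this is not cosmetic. With a single anchor $a$ and a single distinguished $b$, the spoiler separates the words in two $\extintv$-rounds: from the initial configuration move both pebbles to the anchor $a$, then play $\fut_{(0,1]}$ onto the $b$ in $\aaa_n$ (distance $1\in(0,1]$). In $\bbb_n$ the $b$ sits at distance $1\pm\tfrac{1}{n}\notin(0,1]$ (or use $(1,2)$ for the other sign), and unless your ``filler events'' happen to be further $b$'s placed in $(0,1]$-range of the anchor, the duplicator has no legal reply. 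Your description (``a distinguished $b$-event'', unspecified fillers, a ``$\pm 1$ swap'') does not provide such $b$'s, so the identity/swap bijection collapses immediately.

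What is actually needed---and what the paper does---is a \emph{block} of $\Theta(n)$ copies of each relevant letter, so that when the spoiler exploits the unavoidable integer-boundary crossing the duplicator can reply with the neighbouring copy and thereafter maintain an off-by-one configuration. Concretely the paper uses $\phi=\fut_{(0,1)}(a\wedge\fut_{[3,3]}c)$ with words $A_n,B_n$ having $untime=a^{2n+1}c^{2n+1}$, all $a$'s packed in $(0,1)$ and all $c$'s in $(3,4)$ at matching offsets plus a tiny $\epsilon$; $B_n$ shifts only the middle $c$ left by $\epsilon$ to create the exact-$3$ pair. The duplicator plays identically until the spoiler forces a boundary crossing at the middle $c$, after which the configuration is off by one index; the spoiler then needs $\sim n$ further $\until$-rounds (moving two positions at a time) to push a pebble off the end of the $c$-block, so the duplicator survives the $n$-round game on $A_{2n},B_{2n}$. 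Your sketch can be repaired along these lines, but not with a single $b$ and unspecified fillers.
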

\begin{proof}
Consider the \bmtlfp\/ formula $\phi:= \fut_{(0,1)}(a\land \fut_{[3,3]} c)$. Consider a family of words $A_n$ such that $untime(A_n) = a^{2n+1}c^{2n+1}$. Let $\delta=1/(2n+2)^2$ and $\epsilon=1/(2n+2)^4$. All the $a$'s are in the interval (0,1) at time stamps $i\delta$ and all the $c$'s are in the interval $(3,4)$, at time stamps $3+i\delta+\epsilon$ for $1\leq i\leq 2n+1$. Every $a$ has a paired $c$, which is at a distance $3+\epsilon$ from it. Hence, $\forall n ~.~ A_n\not \models \phi$. Let $B_n$ be a word identical 
to $A_n$ but with the middle $c$ shifted leftwards by $\epsilon$, so that it is exactly at a distance of $3$ t.u. (time units) from the middle $a$. Thus, $B_n \models \phi$.

We prove the theorem using the $n$-round $\extintv$\mtlus EF game on the words $A_{2n}$ and $\bbb_{2n}$
where we can show that \ddp\/ has a winning strategy. 
This proves that no modal depth $n$ formula of logic  $\mitlus$ can separate $A_{2n}$ and $B_{2n}$. 
Hence, there is no $\mitlus$ formula giving $\mathcal L(\phi)$
The full description of the \ddp\/ strategy can be found in the Appendix \ref{app:mtlus}.
\qed
\end{proof}

\begin{theorem} \label{theo:mtlfrag3}
 \begin{itemize}
  \item $\bmtlus \not \subseteq \mtlfp$ over strict monotonic timed words (and hence also over weakly monotonic timed words).
  \item $\bmtlus \not \subseteq \tptlfp$ over weakly monotonic timed words. \qed 
 \end{itemize}
\end{theorem}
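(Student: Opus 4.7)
The plan is to handle both parts with a single $\bmtlus$ witness formula, arguing inexpressibility by an $\mtlfp$ EF game for part 1 and by a collapse-to-untimed reduction for part 2. A natural choice of witness is
\[
\phi_1 \;:=\; \fut_{[0,1]}\bigl(a \;\land\; (\bot \until_{[0,1]} b)\bigr),
\]
which lies in $\bmtlus$ since the only interval it uses, $[0,1]$, is bounded. Under the strict-until semantics, $\bot \until_{[0,1]} b$ holds at a position $j$ exactly when $j+1$ exists, is labelled $b$, and lies within one time unit of $j$; hence $\phi_1$ expresses the existence of a positional $ab$-adjacency somewhere strictly after position $1$ within a bounded time window, the timed lift of the classical untimed ``$ab$-factor'' property.

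For part 1, I would follow the template of Theorems \ref{theo:mtlfrag1} and \ref{theo:mtlfrag2}. Given $m$, I would construct strictly monotonic word families $\aaa_n$ and $\bbb_n$ over $\{a,b,c\}$ with $n \gg m$, all time stamps pairwise distinct and contained in $[0,1]$, agreeing on letter counts and on the overall distribution of time stamps, but differing only in that $\bbb_n$ contains a single interior $ab$ positional adjacency while $\aaa_n$ separates every $a$ from its nearest $b$ by at least one $c$. Then $\bbb_n \models \phi_1$ and $\aaa_n \not\models \phi_1$. I would then exhibit a winning strategy for the Duplicator in the $m$-round $\zintv\mtlfp$ EF game on $(\aaa_n, \bbb_n)$: since the $\mtlfp$ restriction only permits the $F$-part and $P$-part responses, Spoiler can never force Duplicator to preserve positional successors, so Duplicator answers by a shift-based copying strategy that keeps the discrepancy stretch unreached across $m$ rounds. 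The $\mtlfp$ specialisation of Theorem \ref{thm:ef} then yields $\aaa_n \formeq{m} \bbb_n$ in $\mtlfp$, so no $\mtlfp$ formula defines $\mathcal L(\phi_1)$.

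For part 2, I reuse $\phi_1$ over weakly monotonic families $\aaa'_n, \bbb'_n$ obtained from the previous ones by collapsing every time stamp to $0$ (permissible under weak monotonicity). Since $0 \in [0,1]$, all timing constraints in $\phi_1$ are trivialised and $\phi_1$ still separates the two families. The key observation is that on any word in which every event shares a single time stamp, every freeze variable of a $\tptlfp$ formula is forced to that common value and every guard $x - T \approx c$ evaluates to a constant; a routine structural induction then shows that every $\tptlfp$ formula is semantically equivalent on such words to an untimed unary ($\fut,\past$) formula, whose expressive power is exactly $\fotwoless$. Since the untimed ``$ab$-factor'' language that $\phi_1$ defines on these zero-time words is a standard example in $\fothree \setminus \fotwoless$ (Etessami--Vardi--Wilke \cite{EVW}), no $\tptlfp$ formula can define $\mathcal L(\phi_1)$ over weakly monotonic words.

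The main obstacle I anticipate is the collapse step in part 2: the inductive elimination of freeze quantifiers and guards in favour of their constant valuations must be carried out uniformly at every position of the word, and one must check that the resulting unary formula agrees with the original $\tptlfp$ formula on all-zero-time words under the strict semantics of $\fut$ and $\past$. Part 1, by contrast, should be routine given the $\mtlfp$ specialisation of Theorem \ref{thm:ef} and the Duplicator strategy patterns developed in the earlier theorems of this section.
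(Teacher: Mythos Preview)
Your approach is correct and, for part~2, essentially coincides with the paper's: collapse to weakly monotonic words with all events at time~$0$, observe that on such words every $\tptlfp$ formula reduces to an untimed $\ltlun$ formula (every freeze variable takes value~$0$, every guard becomes a constant), and then invoke the known fact that the ``$ab$-factor'' property separates $\ltlbin$ from $\ltlun$. The paper phrases this as $\tptlfp \equiv \bmtlfp^0$ over instantaneous words, notes that $\bmtlfp^0$ on such words is semantically isomorphic to $\ltlun$ on the untimed word, and cites the Etessami--Wilke result $LTL[\until,\since] \not\subseteq LTL[\fut,\past]$; your citation of the $\fotwoless$ characterisation from \cite{EVW} is an equivalent way of naming the same separation.

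For part~1 your route diverges. You propose to build explicit strictly monotonic families $\aaa_n,\bbb_n$ with all timestamps in $[0,1]$ and run the $\mtlfp$ EF game directly. The paper instead reuses the \emph{same} collapse trick a second time: over ``unitary'' words (all events in $(0,1)$ with distinct timestamps), every modality $\until_I$ with integer-endpoint $I$ is equivalent either to the untimed $\until$ (when $(0,1)\subseteq I$) or to $\bot$, so $\mtlus$ and $\mtlfp$ over unitary words collapse to $\ltlbin$ and $\ltlun$ over the underlying untimed word, and the known untimed separation applies immediately. Your direct game argument would work---your timestamps-in-$[0,1]$ choice is precisely what makes the integer-endpoint timing constraints trivialise---but it amounts to re-deriving the Etessami--Wilke $\ltlun$ separation inside the timed framework. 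The paper's uniform reduction-to-LTL argument is shorter, handles both parts with a single idea, and spares you from specifying the word families and Duplicator strategy in detail.
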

These results follow by embedding untimed LTL into logics MTL as well as TPTL. The proof can be found in Appendix \ref{app:mtlus}.

\section{TPTL and MTL}
\label{sec:tptlmtl}

Consider the $\tptlf$ formula $\phi_1 \df x. \fut(b \land \fut (c \land T-x \leq 2))$. 
Bouyer \emph{et al} \cite{BCM05} showed that this formula cannot be expressed in $\mtlu$  for pointwise
models. They also gave an $\mtlus$ formula equivalent to it thereby showing that $\mtlus$ is strictly more expressive than $\mtlu$. Prior to this, 
Alur and Henzinger \cite{AH93} considered the formula $\Box(a \Rightarrow \phi_1)$ and they conjectured that this cannot be expressed within $\mtlus$. Using a variant of this formula
and the $\mtlus$ EF games, we now show that $\tptlf$ is indeed expressively incomparable with $\mtlus$.

In Theorem \ref{theo:mtlfrag3} we showed that $\bmtlus \not \subseteq \tptlfp$ over weakly monotonic timed words. We now consider the converse.
\begin{theorem}
\label{theo:main}
\mbox{$\mathit{TPTL[F]}$} $\not\subseteq$ \mtlus\/ over strictly monotonic timed words (and hence also for weakly monotonic timed words).
\end{theorem}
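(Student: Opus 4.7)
The plan is to adapt the Alur--Henzinger conjectured witness to the EF-game setting. I would consider the $\tptlf$ formula
\[
\phi \defn \fut\bigl(a \land x.\fut(b \land \fut(c \land T - x < 2))\bigr),
\]
asserting the existence of an $a$-anchor from which some later $b$ is followed by some later $c$ lying strictly within $2$ time units of \emph{that specific} $a$. The salient feature is that the distance constraint binds the $c$ back to the chosen $a$, not to the intermediate $b$; this is exactly what freeze quantification enables and what the sliding-interval modalities of $\mtlus$ cannot simultaneously simulate across two nested modalities, because the available intermediate anchors for $\until_I$ and $\since_I$ are only the word's events themselves.

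I would then exhibit two families $\aaa_n, \bbb_n$ of strictly monotonic timed words over $\{a,b,c\}$, indexed by the target modal depth $n$. Each word consists of a densely populated block of $a$'s in an early time window, followed by a block of $b$'s and a block of $c$'s within a bounded later window. Block sizes grow polynomially in $n$, and intra-block spacings are of order $\epsilon \defn 1/f(n)$ for a sufficiently fast-growing $f$. In $\aaa_n$ the timings are chosen so that some $a$-$c$ pair is at distance strictly less than $2$, while $\bbb_n$ is obtained from $\aaa_n$ by a globally uniform $\epsilon$-shift of the $c$-block that makes every $a$-$c$ pair sit at distance $\geq 2$. Hence $\aaa_n \models \phi$ but $\bbb_n \not\models \phi$, and this separation survives taking $\epsilon$ sufficiently small.

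The heart of the argument is a winning strategy for \ddp\/ in the $n$-round $\zintv^k \mtlus$ EF game on $(\aaa_n, \bbb_n)$, where $k$ upper-bounds the integer time stamps present. The strategy maintains an invariant asserting that after $r$ rounds the two pebbles lie in corresponding blocks of $\aaa_n$ and $\bbb_n$, with a temporal tolerance $\delta_r$ that grows only mildly in $r$. When \ssp\/ plays an $\until_I$ move jumping to a position $p$ in some block, \ddp\/ responds with a nearby position in the corresponding block of the other word; density within each block guarantees that some such response satisfies the same interval constraint $I$, and proximity preserves the invariant. The $\since_I$-case is entirely symmetric, and the interior interpolations demanded by the U-part/S-part of a round are likewise absorbed by the tolerance because adjacent positions differ by at most one block position.

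The main obstacle will be the quantitative calibration: one must show that after all $n$ rounds the accumulated tolerance $\delta_n$ stays strictly below $\epsilon$, while $\epsilon$ is simultaneously large enough that $\phi$ genuinely separates $\aaa_n$ from $\bbb_n$. This is the timed analogue of the Etessami--Wilke counting argument, and the chief subtlety beyond the future-only case of \cite{BCM05} is that past $\since_I$ moves can traverse from the $c$-block back into the $a$-block; we accommodate this by populating all three blocks densely enough that \ddp\/ has a legal response regardless of the direction or the interval chosen by \ssp. Invoking the $\mathit{MTL}$ EF Theorem (Theorem~\ref{thm:ef}) then yields that no $\mtlus$ formula of modal depth $\leq n$ separates $\aaa_n$ from $\bbb_n$; since $n$ is arbitrary, $\phi$ has no $\mtlus$ equivalent, establishing $\tptlf \not\subseteq \mtlus$ over strictly monotonic timed words, and hence also over the weakly monotonic ones.
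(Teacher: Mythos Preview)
Your proposal has a genuine gap: the construction you describe cannot support a Duplicator winning strategy, because the \emph{global} shift of the $c$-block allows \ssp\/ to win in two rounds. Concretely, since every $a$--$c$ pair in $\bbb_n$ is at distance $\geq 2$ while in $\aaa_n$ the last $a$ and the first $c$ are at distance $< 2$, \ssp\/ plays as follows: in round~1 move (with $\until_{[0,\infty)}$) to the last $a$ in $\aaa_n$; \ddp\/ must land on some $a$ in $\bbb_n$ to preserve $isop$. In round~2 \ssp\/ plays $\until_{(0,2)}$ to the first $c$ in $\aaa_n$. From \emph{any} $a$ in $\bbb_n$ every $c$ is at distance $\geq 2$, so the only responses in the interval are $b$'s (or nothing), and $isop$ fails. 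Density inside the blocks does not help, because the entire $c$-block is out of range after the shift; your ``temporal tolerance'' invariant never gets off the ground.

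The paper's construction avoids exactly this trap: instead of three blocks with a global shift, it uses $m = 2n(k+1)+1$ \emph{overlapping, identical} segments, each carrying its own $a$--$b$--$c$ pattern, and perturbs the time stamp of a \emph{single} $c$ in the middle segment. Because only one position differs, \ddp\/ can play an identical configuration almost everywhere; when \ssp\/ forces a deviation through the unique perturbed $c$, \ddp\/ slides to the corresponding $c$ of an \emph{adjacent} segment. The invariant is then combinatorial rather than metric: the two pebbles' segment indices differ by at most~$1$, and there are enough padding segments on each side (namely $n(k+1)$) to absorb $n$ rounds of play with intervals bounded by $k$. Your tolerance-accumulation idea would need to be replaced by this kind of ``single local defect surrounded by many copies'' structure; the repetition is what prevents \ssp\/ from pinning down the defect in bounded modal depth.
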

\begin{proof}
Let the $\mathit{TPTL[F]}$ formula
$\phi := \fut p.[a\land \{\fut (b\land (T-p\in(1,2))\land \fut(c\land (T-p\in (1,2))))\}]$. This formula characterizes the set of timed words which have an $a$ followed by a $b$ and then a $c$ such that the time lag between the $a$ and $b$ is in the interval $(1,2)$ and the time lag between the $a$ and $c$ is also in $(1,2)$. We show that there is no \mtlus\/ formula that expresses the language defined by $\phi$.

The idea behind the proof is the following. We will design two families of strictly monotonic timed words $\aaa_{n,k}$ and $\bbb_{n,k}$ ($n>0$), such that $\aaa_{n,k} \models \phi$ and 
$\bbb_{n,k} \not \models \phi$. We will then show that for $n$ round $\zintv^k\mtlus$ EF games over $\aaa_{n,k}$ and $\bbb_{n,k}$ the duplicator has a winning strategy. Hence, no $n$ modal depth  $\zintv^k\mtlus$ formula can distinguish words $\aaa_{n,k}$ and $\bbb_{n,k}$. Thus, there is no formula in $\zintv\mtlus$ giving $L(\phi)$.
\end{proof}

\paragraph{Designing the words}
Fix some $n,k$. Let $m=2n(k+1)+1$, $\delta=1/2m$ and $\epsilon<<\delta$. First, we shall describe $\bbb_{n,k}$. The first event
is an $a$ at time stamp 0. (This event is included since  all words must begin with time stamp 0.)
Following this, there are no events in the interval $(0,k]$. From $k+1$ onwards, it has $m$ copies of identical and overlapping segments of length $2+\epsilon$ time units each. If the $i^{th}$ segment $seg_i$ begins at some time stamp (say $t$) then $seg_{i+1}$ begins at $(t+1-\delta)$. The beginning of each segment is marked by an $a$ at $t$, followed by a $b$ in the interval $(t+2-2\delta+2\epsilon,t+2-\delta-2\epsilon)$, and a $c$ in the interval $(t+2, t+2+\epsilon)$, as shown in figure \ref{fig:gameseg}. Note that all the events must be placed such that no two events are exactly at an integral distance from each other (this is possible, since $n$ and $k$ are finite and time is dense). Let $X=n(k+1)+1$. The $X^{th}$ segment is the middle segment, which is padded by $n(k+1)$ segments on either side. Let $seg_X$ begin at time stamp $x$ and the following segments begin at $y$ and $z$ respectively, as shown in the figure \ref{fig:tptlefgame}.  Let $p_x$ denote the position corresponding to the time stamp $x$ in both words.\\
$\aaa_{n,k}$ is identical to $\bbb_{n,k}$ except for the $X^{th}$ segment where the corresponding $c$ is shifted leftwards to be in the interval $(x+2-\epsilon,x+2)$. Let $p^{A}$ and $p^{A'}$ denote the positions of $c$ corresponding to $seg_{X}$ and $seg_{X+1}$ in $\aaa_{n,k}$ respectively. Similarly, let $p^{B}$ and $p^{B'}$ denote the positions of $c$ corresponding to $seg_{X}$ and $seg_{X-1}$ in $\bbb_{n,k}$ respectively. 

Note that $\bbb_{n,k}$ is such that for every $a$, there exists a $c$ at a distance $(1,2)$ from it, but the $b$ between them is at a distance $<1$ t.u. from the $a$. In addition, every $a$ has a $b$ at a distance $(1,2)$ from it, but the subsequent $c$ is at a distance $>2$ t.u. from the $a$.
See Figure \ref{fig:tptlefgame}.
Hence, $\forall n,k>0$, $\bbb_{n,k}\not\models\phi$. On the other hand, $\aaa_{n,k}$ is identical to $\bbb_{n,k}$ except for the $(n(k+1)+1)^{st}$ segment for which the $c$ is shifted left so that
$a$ has a $b$ followed by $c$, both of which are within time distance $(1,2)$ from the $a$. Hence, $\forall n,k>0$, $\aaa_{n,k}\models \phi$. Since all the events occur at time stamps $>k$, the \ssp\/ cannot differentiate between integer boundaries. This enables us to disregard the integer boundaries between the events through the play of the game. Moreover, since the words are such that no two events are exactly integral distance apart from each other, the \ssp\/ is forced to choose a non-singular interval in every round. 

\begin{figure}
\begin{tikzpicture}[scale=0.7,transform shape]
\draw (1,1) node{I} -- (9,1) node{I} -- (14,1) node{l}-- (14.4,1) node{l}-- (15.1,1) node{l}--(15.5,1) node{l}-- (17,1) node{I}-- (17.4,1) node{l}; 
\draw (1,0.3) node{$t$}; \draw (9,0.3) node{$t+1$}; \draw (17,0.3) node{$t+2$}; \draw (14.2,0.7) node{$2\epsilon$}; \draw (15.3,0.7) node{$2\epsilon$}; \draw (16.3,0) node{------$\delta$------}; \draw (14.7,0) node{------$\delta$-------}; \draw (17.2,0.7) node{$2\epsilon$};
\draw (17.2,1.3) node{$c$}; \draw (1,1.3) node{$a$}; \draw (14.7,1.3) node{$b$};
\draw [dotted] (14,1)-- (14,0); \draw [dotted] (15.5,1)-- (15.5,0); \draw [dotted] (17,1)-- (17,0); \draw [dotted] (14.4,1)-- (14.4,0.5); \draw [dotted] (15.1,1)-- (15.1,0.5);
\end{tikzpicture}
\caption{\mtlus\/ EF game : A single segment in $\bbb_{n,k}$}
\label{fig:gameseg}
\end{figure}

\begin{figure}
\begin{tikzpicture}[scale=0.7,transform shape]
\draw (0,7) node{$\mathcal A_{n,k}$};

\draw (8.9,8) node{$p^A$}; \draw (12.6,8) node{$p^{A'}$};
\draw (6.1,4.3) node{$p^{B'}$}; \draw (9.4,4.3) node{$p^B$}; 
\draw (1,7.5) node{$a$};\draw (1.4,7.5) node{$b$};  \draw (4.2,7.5) node{$a$};\draw (7.4,7.5) node{$a$}; \draw (2.9,7.5) node{$c$}; \draw (4.5,7.5) node{$b$}; \draw (6,7.5) node{$c$}; \draw (7.9,7.5) node{$b$}; \draw (8.9,7.5) node{$c$}; \draw (10.8,7.5) node{$a$}; \draw (11.1,7.5) node{$b$}; \draw (12.5,7.5) node{$c$}; \draw (14,7.5) node{$a$}; \draw (14.3,7.5) node{$b$}; 
\draw(1,6) node {$p_x$};
\draw(1,7) node{I}--(1.8,7) node{l}-- (2.6,7) node {l}-- (4.2,7) node{x}--(5,7) node{I}-- (5.8,7) node{l}-- (7.4,7) node{0}-- (8.2,7) node{x}-- (9,7) node{I}-- (10.8,7) node{l}-- (11.5,7) node{0}-- (12.2,7) node{x}-- (14,7) node{l}-- (14.8,7) node{l}-- (15.6,7) node{0};

\draw(1,6.5) node{$x$}; \draw(4.2,6.5) node{$y$}; \draw(5,6.5) node{$x+1$}; \draw(7.4,6.5) node{$z$}; \draw(8.2,6.5) node{$y+1$}; \draw(9,6.5) node{$x+2$}; \draw(11.5,6.5) node{$z+1$}; \draw(12.2,6.5) node{$y+2$}; \draw(15.6,6.5) node{$z+2$}; 

\draw (0,3.5) node{$\mathcal B_{n,k}$};

\draw (1,4) node{$a$};\draw (1.4,4) node{$b$};   \draw (4.2,4) node{$a$};\draw (7.4,4) node{$a$}; \draw (2.9,4) node{$c$}; \draw (4.5,4) node{$b$}; \draw (6,4) node{$c$}; \draw (7.9,4) node{$b$}; \draw (9.3,4) node{$c$}; \draw (10.8,4) node{$a$}; \draw (11.1,4) node{$b$}; \draw (12.5,4) node{$c$}; \draw (14,4) node{$a$}; \draw (14.3,4) node{$b$};

\draw(1,3.5) node{I}--(1.8,3.5) node{l}-- (2.6,3.5) node {l}-- (4.2,3.5) node{x}--(5,3.5) node{I}-- (5.8,3.5) node{l}-- (7.4,3.5) node{0}-- (8.2,3.5) node{x}-- (9,3.5) node{I}-- (10.8,3.5) node{l}-- (11.5,3.5) node{0}-- (12.2,3.5) node{x}-- (14,3.5) node{l}-- (14.8,3.5) node{l}-- (15.6,3.5) node{0};
\draw(1,2.5) node {$p_x$};
\draw(1,3) node{$x$}; \draw(4.2,3) node{$y$}; \draw(5,3) node{$x+1$}; \draw(7.4,3) node{$z$}; \draw(8.2,3) node{$y+1$}; \draw(9,3) node{$x+2$}; \draw(11.5,3) node{$z+1$}; \draw(12.2,3) node{$y+2$}; \draw(15.6,3) node{$z+2$}; 

\draw[dotted] (9.5,4.2)--(12.4,6);
\draw[->>,dotted] (12.4,6) -- (12.4,7.2);
\draw[dotted] (8.9,7.3)--(8.9,6);
\draw[->>,dotted] (8.9,6)--(6.2,4.2);
\end{tikzpicture}
\caption{\mtlus\/ EF game : Duplicator's Strategy}
\label{fig:tptlefgame}
\end{figure}

\oomit{
\paragraph{Choice of Intervals} The words $\aaa_{n,k}$ and $\bbb_{n,k}$ are such that no two events are at integral distance from each other. Hence, in every round of the $\zintv^k-\mtlus$ game, the \ssp\/ is forced to choose an interval of the form $(h,l)$ or $(l,\infty)$, where $h < l$ and $l\leq k$.
}

\paragraph{Key moves of \ddp} 
As the two words are identical except for the time stamp of the middle $c$, the strategy of \ddp\/ is to play a configuration of the form $(i,i)$ whenever possible. Such a configuration $(i,i)$ is called an identical configuration. The optimal strategy of \ssp\/ is to get out of identical configurations as quickly as possible. We give two example plays, where the \ssp\/ can force non-identical configuration (depicted by dotted arrows in figure \ref{fig:tptlefgame}). In first move, the \ssp\/ plays position $p_x$ which \ddp\/ duplicates giving the initial configuration of $(p_x,p_x)$.
\begin{enumerate}
 \item If the \ssp\/ chooses the interval $(1,2)$ and places its pebble at $p^A$, then the \ddp\/ will be forced to place its pebble at $p^{B'}$, which also occurs in the interval $x+(1,2)$. This is shown by downward dotted arrow in the figure.
 \item Alternatively, if the \ssp\/ chooses the interval $(2,3)$ and places a pebble at $p^B$ in $\bbb_{n,k}$, then the \ddp\/ is forced to place its pebble on $p^{A'}$, which is also in the interval $x+(2,3)$.
\end{enumerate}
In both cases, if $(i,j)$ is the resulting configuration, then $seg(i)-seg(j) = 1$. 

\paragraph{\ddp's copy-cat strategy}Consider the $p^{th}$ round of the game, with an initial configuration $(i_p,j_p)$. If the \ddp\/ plays in a manner such that the configuration for the next round is $(i_{p+1},j_{p+1})$ with $seg(i_p)-seg(i_{p+1}) = seg(j_p)-seg(j_{p+1})$, then it is said to have followed the \textit{copy-cat} strategy for the $p^{th}$ round.

\begin{proposition}\label{prop:ddpstrat}
The only case when the \ddp\/ can not follow the \textit{copy-cat} strategy in a round with initial configuration $(i,j)$, is when $i=j=p_x$ and the \ssp\/ chooses to first place its pebble on either $p^A$ or $p^B$ or when $i=p^A$ and $j=p^{B}$ and \ssp\/ chooses to place a pebble at $p_x$ in either word.
\end{proposition}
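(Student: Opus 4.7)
The plan is to establish both directions of the ``only case'' claim by a case analysis driven by the structural similarity of $\aaa_{n,k}$ and $\bbb_{n,k}$. First I would record the key structural fact: the two words agree on every letter and every time stamp \emph{except} at the middle-segment $c$, which occupies position $p^A$ in $\aaa_{n,k}$ at a time in $(x+2-\epsilon,x+2)$ and position $p^B$ in $\bbb_{n,k}$ at a time in $(x+2,x+2+\epsilon)$. Everywhere else the two words share identical letters at identical time stamps, and successive segments are offset by exactly $1-\delta$ time units.

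For the ``copy-cat succeeds outside the two exceptions'' direction, consider any non-exceptional configuration $(i,j)$ and any Spoiler $\until_I$ or $\since_I$ move in word $\delta$ to a position $i'_\delta$ with $(\tau_\delta[i'_\delta]-\tau_\delta[i_\delta])\in I$. Because the segments are spaced $1-\delta$ apart and the chosen interval from $\zintv^k$ has integer endpoints at most $k$, shifting $i'_\delta$ by any fixed number of segments produces a position at the same time-distance up to a sub-$\delta$ perturbation, hence still inside $I$; outside the rogue $c$, the letter sequences of the two words coincide, so the shifted position carries the correct letter; and the adjacency bit required by the $U$- or $S$-part of Part~II is automatic since identical letter sequences preserve $i'_\delta-i_\delta=1$. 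The argument breaks only when Duplicator's copy-cat response would have to land on or traverse the rogue $c$: a short calculation shows this happens precisely when one pebble is at $p_x$ and Spoiler targets $p^A$ or $p^B$ (the distances $2-\epsilon$ and $2+\epsilon$ straddle the integer $2$, so an interval like $(1,2)$ or $(2,3)$ admits one distance but not the other), or symmetrically when the pebbles sit at $(p^A,p^B)$ and Spoiler targets $p_x$ in the past direction.

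For the ``these two cases really fail'' direction, I would exhibit the failure directly: from $(p_x,p_x)$, if Spoiler picks $p^A$ with interval $(1,2)$, no $c$ in $\bbb_{n,k}$ lies within $(1,2)$ of $p_x$, forcing Duplicator to respond with $p^{B'}$ in segment $X-1$; the resulting segment jumps are $0$ and $1$, violating the copy-cat equation $seg(i_p)-seg(i_{p+1})=seg(j_p)-seg(j_{p+1})$. Symmetric arguments cover the $p^B$ sub-case (Duplicator forced to $p^{A'}$ in segment $X+1$) and the dual past-move case from $(p^A,p^B)$. The main technical obstacle I anticipate is Part~II of the move, where after the main response Spoiler may still pick an intermediate position: here I would argue that the intermediate witness can itself be mirrored by the same segment-shift, since outside the middle-segment $c$ the letter sequences agree exactly and each segment contains a unique occurrence of every letter. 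Once letter match, interval membership, and the Part~II intermediate match are all verified, the proposition follows.
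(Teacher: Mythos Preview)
Your plan is correct and tracks the paper's approach: both arguments rest on the fact that $\aaa_{n,k}$ and $\bbb_{n,k}$ are identical in letters and time stamps except at the single middle-segment $c$, and that this lone discrepancy only interferes with copy-cat when the two copies of that $c$ straddle an integer boundary relative to the current pebble position---which happens exactly at $p_x$. The paper's proof is much terser than yours: it records only the single observation that for every position $p \neq p_x$ and every integer $i$, one has $\tau_p - \tau_{p^A} \in (i,i+1)$ iff $\tau_p - \tau_{p^B} \in (i,i+1)$, and then simply asserts that the two exceptional configurations are the only ones where this breaks because $p^A$ and $p^B$ lie on opposite sides of $x+2$. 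Your explicit treatment of the failure direction (with the concrete intervals $(1,2)$ and $(2,3)$), the Part~II intermediate positions, and the segment-shift mechanics for non-identical starting configurations fills in details the paper leaves entirely to the reader; the paper does not separately verify the $\until$-part intermediate witnesses or the non-identical case at all.
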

\begin{proof}
Firstly, note that $untime(\aaa_{n,k}) = untime (\bbb_{n,k})$ and the only position at which the two words differ is at $p^A$ (and correspondingly $p^B$), where $\tau_{p^B} - \tau_{p^A} < 2\epsilon$. By observing the construction of the words, we can infer that $\forall p \in dom (\aaa_{n,k})$, if $p\neq p_x$ then $\forall i\in\mathbb Z$ we have $\tau_p-\tau_{p^A}\in (i,i+1)$ iff $\tau_p-\tau_{p^B}\in (i,i+1)$. However, if the initial configuration is $(p_x,p_x)$ or $(p^A,p^B)$, then \ddp\/ may not be able to follow the \textit{copy-cat} strategy, since $p_A$ and $p_B$ lie on either side of $x+2$.
\qed
\end{proof}

The lemma below shows that in an $n$ round game, for each round, the \ddp\ can either achieve an identical configuration, or restrict  the segment difference between words to a maximum of $1$ in  which case  there are sufficient number of segments on either side for the \ddp\/ to be able to duplicate the \ssp's moves for the remaining rounds.
\begin{lemma}\label{lem:maingame}
For an $n$ round $\zintv^k$ \mtlus\/ EF game over the words $\aaa_{n,k},\bbb_{n,k}$ the \ddp\/ always has a winning strategy such that for any $1\leq p\leq n$, if $(i_p,j_p)$ is the initial configuration of the $p^{th}$ round then 
\begin{itemize}
\item $seg(i_p)-seg(j_p) ~\leq~ 1$  AND
\item If $seg(i_p)\neq seg(j_p)$ then
\begin{itemize}
\item[]$Min\{seg(i_p),seg(j_p)\} ~>~ (n-p+1)(k+1)$
\item[]$Max\{seg(i_p),seg(j_p)\} ~<~ m-(n-p+1)(k+1)$
\end{itemize}
\end{itemize}
\end{lemma}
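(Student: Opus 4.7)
The plan is to prove this by induction on the round number $p$, taking as induction hypothesis the full conjunction stated in the lemma. The base case $p=1$ is immediate: the game starts from $(1,1)$, an identical configuration, so $seg(i_1)=seg(j_1)$ and the second clause is vacuous. For the inductive step I need to exhibit a Duplicator response for every Spoiler move from a configuration $(i_p,j_p)$ satisfying the invariants, and verify the invariants again for the resulting configuration $(i_{p+1},j_{p+1})$.

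The key quantitative fact to establish first is a reach bound: since the permitted intervals lie in $\zintv^k$ (upper bound at most $k$), and consecutive segments start $1-\delta$ time units apart, a single Spoiler move in Part~I can shift the pebble by at most $k+1$ segments. Combined with the hypothesis at round $p$, this guarantees that both $i_{p+1}$ and $j_{p+1}$ stay within the padding $(n(k+1))$ on each side of the middle segment $seg_X$, so Duplicator never runs off the end and the second bullet of the invariant propagates.

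The strategy itself splits along Proposition~\ref{prop:ddpstrat}. In the generic case, Duplicator plays \emph{copy-cat}: given Spoiler's Part~I choice $i'_\delta$, Duplicator picks the position in the other word with the same untimed index. Because the two words share their untimed projection and the only timing discrepancy $\tau_{p^B}-\tau_{p^A}<2\epsilon$ is dwarfed by the width of any non-singular $\zintv^k$-interval Spoiler is forced to use, the chosen position lies in the same $I$. Copy-cat trivially preserves $seg(i_p)-seg(j_p)$, hence the invariant. In the exceptional situations flagged by the proposition --- namely $(p_x,p_x)$ with Spoiler aiming at $p^A$ or $p^B$, or $(p^A,p^B)$ with Spoiler aiming at $p_x$ --- Duplicator instead plays the canonical offset response shown in Figure~\ref{fig:tptlefgame} (e.g.\ $p^A \leftrightarrow p^{B'}$ using interval $(1,2)$, or $p^B \leftrightarrow p^{A'}$ using $(2,3)$). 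This introduces a segment difference of exactly $1$, and since $X=n(k+1)+1$ the new positions sit squarely within the required window $((n-p)(k+1),\, m-(n-p)(k+1))$.

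Finally, Part~II must be handled: for the $U$/$S$-part, Spoiler picks an intermediate $i''_{\deltabar}$ between the old and new positions. Because Duplicator's Part~I response preserves the untimed-index structure (every segment has the same pattern $a,b,c$), Duplicator can match $i''_{\deltabar}$ by the position with the corresponding untimed index in the other word, and the isomorphism check $isop$ succeeds since the letters agree. The main obstacle, and the place where one must be most careful, is the exceptional copy-cat failure around the middle segment: one has to verify both that the offset response lies in Spoiler's interval (using $\epsilon \ll \delta$ and the fact that Spoiler cannot use singular intervals of the form $[c,c]$ because no two events are at integral distance) and that the Part~II intermediate position is still matchable when the pebbles lie on opposite sides of $x+2$. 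Once this verification is done segment by segment, both invariants follow and the induction closes.
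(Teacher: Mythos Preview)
Your overall architecture---induction on $p$, the base case $(1,1)$, and the case split via Proposition~\ref{prop:ddpstrat} into copy-cat versus the exceptional middle-segment moves---is exactly the paper's approach. Two concrete gaps need fixing, though.

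First, your reach-bound argument is wrong as stated. You write that intervals in $\zintv^k$ have ``upper bound at most $k$'', but $\zintv^k$ explicitly contains unbounded intervals $\langle l,\infty)$ with $l\leq k$, so Spoiler can jump arbitrarily far in a single Part~I move. The paper handles this separately: with a bounded interval $\langle h,l\rangle$, $l\leq k$, the shift is at most $k$ segments; with an unbounded interval, Spoiler may jump far, but then Duplicator can either land on an \emph{identical} configuration (collapsing the segment difference to~$0$) or, if forced to preserve the offset, still moves at most $k+1$ segments toward an end. Without this case distinction your padding bound $(n-p)(k+1)$ does not follow.

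Second, your description of copy-cat as ``pick the position with the same untimed index'' is only correct from an identical configuration. Once the segment difference is~$1$ (i.e.\ $i_p\neq j_p$ as indices), matching the same untimed index would change the segment difference and, more importantly, need not satisfy Spoiler's interval: the starting timestamps $\tau_{i_p}$ and $\tau_{j_p}$ differ by roughly $1-\delta$, so the Duplicator's time distance is off by that amount, which can fall outside an interval of length~$<1$. The paper's copy-cat instead preserves the \emph{segment shift}, $seg(i_{p+1})-seg(i_p)=seg(j_{p+1})-seg(j_p)$, and Proposition~\ref{prop:ddpstrat} is precisely the claim that this response always lies in Spoiler's interval except at the two named configurations. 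Your justification via ``the only timing discrepancy is $<2\epsilon$'' applies to identical configurations only.
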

\begin{proof}
The duplicator always follows \textit{copy-cat} strategy in any configuration whenever possible.
We can prove the lemma by induction on $p$. \\ 
\emph{Base step:} The lemma holds trivially for $p=1$, as starting configuration $(i_1,j_1) = (1,1)$. \\
\emph{Induction Step:} Assume that the lemma is true for some $p<n$. We shall prove that the lemma holds for $p+1$. Consider the $p^{th}$ round, with initial configuration $(i_p,j_p)$.\\
\emph{Case 1:} The \ddp\/ can follow \textit{copy-cat} strategy :\\
Then,  $seg(i_{p+1})-seg(j_{p+1}) = seg(i_p)-seg(j_p)$. By induction hypothesis, 
$seg(i_p)-seg(j_p)~\leq~ 1$ giving $seg(i_{p+1})-seg(j_{p+1}) ~\leq~ 1$. Also,
since exactly $k$ number of segments begin within a time span of $k$ time units,
if the \ssp\/ chooses an interval of the form $(h,l)$, with $l\leq k$, then we know that $seg(i_{p+1}) - seg(i_p) \leq k$ and the lemma will hold for $p+1$. If the \ssp\/ chooses an interval $(k,\infty)$ and places a pebble $k+1$ segments away in $\aaa_{n,k}$, the \ddp\/ also has to place its pebble at least $k+1$ segments away, thereby, either making $seg(i_{p+1})=seg(j_{p+1})$ or making $i_{p+1}$ and $j_{p+1}$ come closer to either end by at most $k+1$ segments.\\
\emph{Case 2:} The \ddp\/ can not follow \textit{copy-cat} strategy:\\
From proposition \ref{prop:ddpstrat}, this can happen only if $seg(i_p)=seg(j_p)= X$, the middle segment. In this case, we know that $seg(i_{p+1})-seg(j_{p+1})=1$, $X-2 \leq seg(i_{p+1})\leq X+2$ and $X-2\leq seg(j_{p+1}) \leq X+2$. Hence the lemma holds in this case too.
\qed
\end{proof}

\section{Comparing \ttlxy\/ with \mtlus\/ fragments}
\label{sec:ttlmtl}
\subsection{Embedding \ttl\/ into \mitlfp:} Fix a formula $\phi \in \ttlxy$. The formula $\phi$ may be represented by its parse tree $T_{\phi}$, such that the subformulas of $\phi$ form the subtrees of $T_{\phi}$. Let $\sub(n)$ denote the subformula corresponding to the subtree rooted at node $n$, and let $n$ be labelled by $\opr(n)$ which is the outermost  operator (such as $X_\theta, \lor,\neg, x.$ etc.) if $n$ is an interior node, and by the corresponding atomic proposition, if it is a leaf node. We will use the notion of subformulas and nodes interchangeably. The \textit{ancestry} of a subformula $n$ is the set of nodes in the path from the root up to (and including) $n$. \\
Let $\eta$ to range over subformulas of $\phi$ with $\eta_{root}$ denoting $\phi$. Logic $\ttlxy$ is a deterministic freeze logic. Hence, given a timed word $\rho$, in evaluating $\rho,1,\nu_0 \models \phi$, any subformula $\eta$ of $\phi$ needs to be evaluated only at a uniquely determined position in $dom(\rho) \cup \{\bot\}$ called $pos_\rho(\eta)$. We call this the \textit{Unique Parsability} property of \ttl\/ formulas. Here, notation $pos_\rho(\eta)= \bot$ indicates that such a position does not exist in $\rho$ and that the subformula $\eta$ plays no role in evaluating $\rho,1,\nu_0 \models \phi$. Also, $val_\rho(\eta)$ is the unique valuation function of freeze variables under which $\eta$ is evaluated. Note that $pos$ is strict w.r.t. $\bot$, i.e. if $\eta=OP(\ldots,\eta_1,\ldots)$ and $pos_\rho(\eta)=\bot$ then $pos_\rho(\eta_1)=\bot$. Also, $val$ is a partial function where $val_\rho(\eta)$ is defined only when $pos_\rho(\eta) \not= \bot$. We define $pos_\rho(\eta)$ together with $val_\rho(\eta)$ which are both simultaneously defined by induction on the depth of $\eta$. Firstly, define $pos_\rho(\eta_{root})=1$ and $val_\rho(\eta_{root})=\nu_0$. Now consider cases where $pos_\rho(\eta) = i ~(\not= \bot)$ and $val_\rho(\eta)=\nu$. 
\begin{itemize}
\item If $\eta= SP \eta_1$ then $pos_\rho(\eta_1) = 1$ and $val_\rho(\eta_1)=\nu$.
 \item  If $\eta= EP \eta_1$ then $pos_\rho(\eta_1) = \#\rho$ and $val_\rho(\eta_1)=\nu$.
\item If $\eta=\eta_1 \lor \eta_2$ or $\eta=\neg \eta_1$ then $pos_\rho(\eta_1) = pos_\rho(\eta_2)= i$ and $val_\rho(\eta_1) = val_\rho(\eta_2)= \nu$. 
\item If $\eta= x. \eta_1$ then $pos_\rho(\eta_1) = i$ and $val_\rho(\eta_1)= 
\nu(x \leftarrow \tau_i)$.
\item Let $\eta= X_\theta \eta_1$. Then, $pos_\rho(\eta_1) = \bot$ if 
$\forall k > i, ~\rho,k,\nu \not \models \theta$. Otherwise, $pos_\rho(\eta_1) = j$ s.t.
$j>i$ and $\rho,j,\nu\models \theta$ and $\forall i < k < j, ~\rho,k,\nu\not \models \theta$.
Moreover, $val_\rho(\eta_1) = \nu$.
\item The case of $\eta=Y_\theta \eta_1$ is symmetric to that of $\eta=X_\theta \eta_1$. 
\end{itemize}
Given a freeze variable $x$, let $anc_x(\eta)$ be the node in the ancestry of $\eta$ and nearest to it, at which $x$ is frozen. Hence, $anc_x(\eta)$ is the smallest ancestor $\eta'$ of $\eta$, which is of the form $x.\eta'$. If there is  no such ancestor, then let $anc_x(\eta)=\eta_{root}$. The following proposition follows from this definition.
\begin{proposition}\label{prop:anc}
 $val_\rho(\eta)(x) = \tau_{pos_\rho(anc_x(\eta))}$
\end{proposition}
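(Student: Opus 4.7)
The plan is to prove the proposition by structural induction on the depth of $\eta$ in the parse tree $T_\phi$, exploiting the case-by-case definitions of $pos_\rho$ and $val_\rho$ given immediately before the statement. The key conceptual point is that $val_\rho(\eta')$ changes from $val_\rho(\eta)$ only at freeze nodes $\eta = x.\eta'$, and exactly in the coordinate $x$; meanwhile $anc_x$ "resets" precisely at those same freeze nodes. So the two sides of the equation move in lockstep down the parse tree.

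\smallskip
\noindent\textbf{Base case.} Here $\eta = \eta_{root}$. By the stated convention $anc_x(\eta_{root}) = \eta_{root}$, so $pos_\rho(anc_x(\eta_{root})) = pos_\rho(\eta_{root}) = 1$, and $\tau_1 = 0$ by the standing convention on timed words. On the other hand $val_\rho(\eta_{root}) = \nu_0$, and $\nu_0(x) = 0$ by definition. Hence both sides equal $0$.

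\smallskip
\noindent\textbf{Inductive step.} Assume the claim holds for $\eta$ with $pos_\rho(\eta) \neq \bot$, and let $\eta'$ be any child of $\eta$ with $pos_\rho(\eta') \neq \bot$. I would split on $\opr(\eta)$. For the ``non-freezing'' cases $\opr(\eta) \in \{\lor,\neg, SP, EP, X_\theta, Y_\theta\}$, the clauses for $val_\rho$ give $val_\rho(\eta') = val_\rho(\eta)$, and since $\eta$ does not freeze any variable we have $anc_x(\eta') = anc_x(\eta)$ for every $x$; applying the induction hypothesis finishes these cases immediately. The only interesting case is $\eta = y.\eta'$. If $y \neq x$, then $\eta$ does not freeze $x$, so $anc_x(\eta') = anc_x(\eta)$, and the defining clause gives $val_\rho(\eta')(x) = val_\rho(\eta)(x(x\not=y))$, i.e.\ $val_\rho(\eta')(x) = val_\rho(\eta)(x)$, and the induction hypothesis applies. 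If $y = x$, then $\eta$ itself is the nearest ancestor of $\eta'$ at which $x$ is frozen, so $anc_x(\eta') = \eta$, and the defining clause gives $val_\rho(\eta')(x) = \nu(x \leftarrow \tau_{pos_\rho(\eta)})(x) = \tau_{pos_\rho(\eta)} = \tau_{pos_\rho(anc_x(\eta'))}$, as required.

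\smallskip
I do not anticipate a real obstacle: the statement is essentially just a bookkeeping lemma that unpacks the simultaneous inductive definition of $pos_\rho$ and $val_\rho$ and matches it against the definition of $anc_x$. The only mild subtlety is handling the boundary convention $anc_x(\eta)=\eta_{root}$ when no freeze of $x$ occurs on the path; this is exactly what makes the base case work (because $\tau_1 = 0 = \nu_0(x)$) and what keeps the non-freezing inductive cases uniform. One should also note that the hypothesis $pos_\rho(\eta)\neq\bot$ propagates upward along ancestors (since $pos_\rho$ is strict w.r.t.\ $\bot$ in the reverse direction), so the quantity $\tau_{pos_\rho(anc_x(\eta))}$ is well-defined throughout the induction.
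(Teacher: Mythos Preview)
Your proof is correct and essentially unpacks what the paper leaves implicit: the paper does not prove this proposition but merely asserts that it ``follows from this definition'' of $anc_x$. Your structural induction on the depth of $\eta$, splitting on $\opr(\eta)$ and matching each case against the defining clauses for $val_\rho$, is precisely the natural way to justify the claim.
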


\begin{lemma} \label{lem:ttltranspos} 
For any subformula $\eta$ of a \ttlxy\/ formula $\phi$,  we can effectively construct an 
\mitlfp\/ formula $\alpha(\eta)$ such that $\forall \rho\in T\Sigma^*$ we have $pos_\rho(\eta)=j$ iff $\rho,j \models \alpha(\eta)$.
\end{lemma}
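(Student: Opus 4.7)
The plan is to construct $\alpha(\eta)$ by induction on the distance of $\eta$ from the root of $T_\phi$, going top-down from $\eta_{root}$ toward the leaves, while simultaneously maintaining the invariant that $\alpha(\eta)$ holds at at most one position of any $\rho$, which (when it exists) is exactly $pos_\rho(\eta)$. For the base I take $\alpha(\eta_{root}) := \neg \past_{[0,\infty)}\top$: the strict past semantics makes this true precisely at position $1$, which is always $pos_\rho(\eta_{root})$.

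In the inductive step I case-split on $\opr(\eta)$. The transparent cases $\eta \in \{\eta_1 \lor \eta_2,\ \neg \eta_1,\ x.\eta_1\}$ leave $pos_\rho$ unchanged, so I simply reuse $\alpha(\eta_i) := \alpha(\eta)$. For $\eta = SP\,\eta_1$ I set $\alpha(\eta_1) := (\neg \past_{[0,\infty)}\top) \land E\,\alpha(\eta)$, where $E\phi := \phi \lor \fut_{[0,\infty)}\phi \lor \past_{[0,\infty)}\phi$ encodes ``$\phi$ is true somewhere in $\rho$'' and $EP$ is symmetric via $\neg \fut_{[0,\infty)}\top$. The principal case is $\eta = X_\theta\,\eta_1$ with $\theta = (a,g)$ (and $Y_\theta$ the mirror image). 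Assuming an $\mitlfp$ formula $\beta_\eta$ equivalent at position $j$ to ``$\rho,j,val_\rho(\eta) \models \theta$'' has been built, I define
\[
\alpha(\eta_1) \;:=\; \beta_\eta \land \past_{[0,\infty)}\alpha(\eta) \land \neg \past_{[0,\infty)}\bigl(\beta_\eta \land \past_{[0,\infty)}\alpha(\eta)\bigr).
\]
Because the induction hypothesis guarantees $\alpha(\eta)$ holds only at the unique position $pos_\rho(\eta)$, the third conjunct is equivalent to ``no $\beta_\eta$-event occurs strictly between $pos_\rho(\eta)$ and $j$'', so $\alpha(\eta_1)$ isolates exactly the first $\beta_\eta$-occurrence after $pos_\rho(\eta)$, matching the semantics of $X_\theta$.

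The step I expect to be the main obstacle is the construction of $\beta_\eta$ itself. The atom $\sigma_j = a$ is immediate, but each conjunct $x - T \approx c$ of $g$ must, by Proposition~\ref{prop:anc}, be read as $\tau_{pos_\rho(anc_x(\eta))} - \tau_j \approx c$, and for $\approx$ being ``$=$'' this is a \emph{punctual} constraint -- precisely the kind of constraint $\mitlfp$ is famously unable to express in general. My plan is to defeat this by leveraging the unique parsability of $\ttlxy$: writing $\psi_x := \alpha(anc_x(\eta))$, which is available by the induction since $anc_x(\eta)$ is an ancestor of $\eta$ already processed, the position $k$ where $\psi_x$ holds is \emph{unique}. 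For $c > 0$ I then encode $\tau_k - \tau_j = c$ by
\[
\fut_{[0,c]}\psi_x \land \fut_{[c,\infty)}\psi_x,
\]
because the intersection $[0,c] \cap [c,\infty) = \{c\}$ pins the unique $\psi_x$-witness to time distance exactly $c$; the case $c < 0$ is the past-mirror, and $c = 0$ is dispatched by a short disjunction covering the weakly monotonic subcase $k \neq j$ with $\tau_k = \tau_j$ via $\fut_{[0,\infty)}\psi_x \land \neg \fut_{(0,\infty)}\psi_x$ (and its past twin). The non-punctual comparisons $\approx \in \{<,\leq,>,\geq\}$ translate directly to single $\fut_I$ or $\past_I$ modalities with non-punctual $I$. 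Conjoining $a$ with the translations of all conjuncts of $g$ yields $\beta_\eta$, which closes the induction and delivers the claimed $\alpha(\eta)$.
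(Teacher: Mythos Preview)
Your proposal is correct and follows essentially the same route as the paper's proof: top-down induction from $\eta_{root}$, passing $\alpha$ unchanged through boolean and freeze operators, and for $X_\theta$ building the ``first $\theta$-event strictly after the unique $\alpha(\eta)$-position'' formula, with guards translated by locating the frozen position via $\alpha(anc_x(\eta))$ and rendering the punctual constraint $=c$ as the conjunction of the $\leq c$ and $\geq c$ clauses against that unique witness. Your addition of the $E\,\alpha(\eta)$ conjunct in the $SP/EP$ case and your use of $[0,\infty)$ rather than $(0,\infty)$ are minor refinements (the former respects $\bot$-strictness of $pos$, the latter is safer under weak monotonicity), but the argument is the same.
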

\begin{proof}
The construction of $\alpha(\eta)$ follows the inductive definition of $pos_\rho(\eta)$, and the lemma may be proved by induction on the depth of $\eta$. Consider any timed word $\rho$.
\begin{itemize}
\item Firstly, $\alpha(\eta_{root}) = \neg P_{(0,\infty)} \top$. Therefore, $\rho,i\models \alpha(\eta_{root})$ iff $i=1$.
\item Similarly, if $\eta = SP \eta_1$ then $\alpha(\eta_1) = \neg P_{(0,\infty)} \top$. Hence, $\rho,i\models \alpha(\eta_{root})$ iff $i=1=pos_\rho(\eta_1)$.
\item If $\eta = EP \eta_1$ then $\alpha(\eta_1) = \neg F_{(0,\infty)} \top$. Hence, $\rho,i\models \alpha(\eta_{root})$ iff $i=\#\rho=pos_\rho(\eta_1)$.
\item If $\eta$ is of the form $\eta_1 \lor \eta_2$ or $\neg \eta$ or $x.\eta_1$ then $\alpha(\eta_1) = \alpha(\eta)$. This follows from the fact that $pos_\rho(\eta)= pos_\rho(\eta_1) = pos_\rho(\eta_2)$.
\item Now consider the main case of $\eta= X_\theta \eta_1$ with $\theta=(a,g)$. For given $\theta$, we define a corresponding \mitlfp\/ formula $\mathcal{CF}(\theta,\eta)$
such that the following proposition holds:
\begin{proposition}\label{prop:guard}
$\rho,i,val_\rho(\eta)\models\theta~$ iff $~\rho,i\models \mathcal{CF}(\theta,\eta)$.
\end{proposition}
Using this we define $\alpha(\eta_1)$ and show that $\rho,i\models \alpha(\eta_1)$ iff $i=pos_\rho(\eta_1)$.
\item The case of $\eta= Y_\theta \eta_1$ is symmetric to the above case.
\end{itemize}
Given $\theta=(a,g)$, define $\mathcal {CF}(\theta,\eta) ~=~ a \land  {\mathcal C}(g,\eta)$ 
where the construction of $\mathcal C(g,\eta)$ is given in Table \ref{table:constraint}. Note that any constraint of the form $x -T =c$ can be replaced by equivalent constraint $(x - T \leq c) \land (x-T \geq c)$. Similarly, for $T - x =c$ too. We omit from Table \ref{table:constraint}, the remaining cases of $T -x \approx c$ which are similar.
\begin{table}
\begin{center}
\begin{tabular}{|l|l|}
\hline
$~g~$ & $~{\mathcal C}(g,\eta)~$ \\
 \hline
 $~x - T < c~$ & $~F_{[0,c)} \alpha(anc_x(\eta))~$ \\
 \hline
 $~x - T \leq c~$ & $~F_{[0,c]} \alpha(anc_x(\eta))~$ \\
 \hline
 $~x - T > c~$ & $~F_{(c,\infty)} \alpha(anc_x(\eta))~$ \\
 \hline
  $~x - T \geq c~$ & $~F_{[c,\infty)} \alpha(anc_x(\eta))~$ \\
 \hline
 $~T - x < c~$ &  $~P_{[0,c)} \alpha(anc_x(\eta))~$ \\
 \hline
 $~g_1 \land g_2~$ & $~{\mathcal C}(g_1,\eta) \land  {\mathcal C}(g_2,\eta)~$ \\
 \hline
\end{tabular}
\end{center}
\caption{}
\label{table:constraint} 
\end{table}
To sketch the proof of proposition \ref{prop:guard}, we first show that
$\rho,i\models \mathcal C(g,\eta)$ iff $val_\rho(\eta),\tau_i\models g$.
From this, it follows that $\rho,i,val_\rho(\eta)\models\theta$ iff $\rho,i\models \mathcal {CF}(\theta,\eta)$.
Consider the case where $g = x-T<c$. Then, $\mathcal C(g,\eta) = F_{[0,c)} \alpha(anc_x(\eta))$. By semantics of \mitlfp, we know that $\rho,i\models \mathcal C(g,\eta)$ iff $\exists j>i$ such that (i) $j=pos_\rho(anc_x(\eta))$ (using the inductive hypothesis) and (ii) $\tau_j-\tau_i \in [0,c)$. However, from proposition \ref{prop:anc}, we know that $val_\rho(\eta)(x) = \tau_j$. Hence,
(i) and (ii) hold iff $val_\rho(\eta),\tau_i\models g$. The other cases may be proved similarly.

Now, define $\alpha(\eta_1) = \mathcal {CF}(\theta,\eta) ~\land~ (P_{(0,\infty)} \alpha(\eta)) ~\land~ (\neg P_{(0,\infty)} (\mathcal {CF}(\theta,\eta) \land P_{(0,\infty)} \alpha(\eta) ))$.\\
The three conjuncts of the above formula respectively give the following observations. $\rho,i\models \alpha(\eta_1)$ iff
(i) $\rho,i,val_\rho(\eta_1)\models \theta$ (from proposition \ref{prop:guard}),
(ii) $\exists j<i ~.~ j=pos_\rho(\eta)$ (from induction hypothesis), and 
(iii) $\forall k ~.~ pos_\rho(\eta) <k<i ~.~ \rho,k,val_\rho(\eta_1)\not\models \theta$
\qed
\end{proof}

Now, define the evaluation $eval_\rho(\eta)$ of a subformula as its truth value at its 
deterministic position $pos_\rho(\eta)$. This can be defined as follows:
If $pos_\rho(\eta) \not= \bot$ then
$eval_\rho(\eta) = (\rho,val_\rho(\eta),pos_\rho(\eta) \models \eta)$ and $\mathit{false}$
otherwise. Clearly,
since $pos_\rho(\eta_{root})=1$ and $val_\rho(\eta_{root})= \nu_0$,
it follows that $eval_\rho(\eta_{root}) = ((\rho,1,\nu_0) \models \eta_{root})$. 
\begin{theorem}
\label{lem:ttlbeta}
For every subformula $\eta$, we construct an \mitlfp\/ formula $\beta(\eta)$ such that
$eval_\rho(\eta) \fif pos_\rho(\eta)\neq \bot$ and $\rho,pos_\rho(\eta) \models \beta(\eta)$.
\end{theorem}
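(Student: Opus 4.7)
My plan is to construct $\beta(\eta)$ by structural induction on the depth of $\eta$, reusing the position locator $\alpha$ from Lemma \ref{lem:ttltranspos} and the guard encoder $\mathcal{CF}$ from Proposition \ref{prop:guard} as the main ingredients. For the base and Boolean/freeze cases, I would take $\beta(\top) = \top$, $\beta(\theta) = \mathcal{CF}(\theta,\eta)$, $\beta(\eta_1 \lor \eta_2) = \beta(\eta_1) \lor \beta(\eta_2)$, $\beta(\neg \eta_1) = \neg \beta(\eta_1)$, and $\beta(x.\eta_1) = \beta(\eta_1)$. In each of these clauses, $pos_\rho(\eta)$ coincides with $pos_\rho(\eta_i)$ and the effect of the freeze is already absorbed into $val_\rho(\eta_1)$, so correctness follows directly from the induction hypothesis (and from Proposition \ref{prop:guard} for the atomic guarded case).

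For $SP$ and $EP$, the evaluation point shifts from $pos_\rho(\eta) = i$ to position $1$ (resp.\ $\#\rho$), so I would relocate via past/future modalities. Concretely, set
\[
\beta(SP\, \eta_1) \df (\neg P_{(0,\infty)} \top \land \beta(\eta_1)) \lor P_{(0,\infty)}\bigl(\beta(\eta_1) \land \neg P_{(0,\infty)} \top\bigr),
\]
and symmetrically for $EP$ with $F_{(0,\infty)}$ in place of $P_{(0,\infty)}$. Since ``position $1$'' is characterised by ``no strict past'' and $(0,\infty)$ is non-punctual, these formulas stay within $\mitlfp$ and faithfully encode ``$\beta(\eta_1)$ holds at the first (resp.\ last) position'' from any starting index.

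The main case is the guarded next/previous modality. I would set
\[
\beta(X_\theta\, \eta_1) \df F_{(0,\infty)}\bigl(\alpha(\eta_1) \land \beta(\eta_1)\bigr), \qquad \beta(Y_\theta\, \eta_1) \df P_{(0,\infty)}\bigl(\alpha(\eta_1) \land \beta(\eta_1)\bigr).
\]
By Lemma \ref{lem:ttltranspos}, $\alpha(\eta_1)$ holds at exactly $pos_\rho(\eta_1)$, which, when $\eta = X_\theta \eta_1$ with $pos_\rho(\eta) = i$, is the unique next position $j > i$ carrying $\theta$ under valuation $val_\rho(\eta)$. The $F_{(0,\infty)}$ witness is therefore forced onto $j$, and the conjoined $\beta(\eta_1)$ inherits the correct truth value by the induction hypothesis. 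When $pos_\rho(\eta_1) = \bot$, $\alpha(\eta_1)$ is nowhere true and $eval_\rho(\eta_1)$ is false by definition, so both sides of the ``iff'' vanish together, thanks to the strictness of $pos_\rho$ with respect to $\bot$.

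The hard part, I expect, is not conceptual but a bookkeeping one: carrying through the induction the correspondence between the \ttl\/ valuation $val_\rho(\eta)$ and the way $\beta(\eta)$ reaches back to the freezing node through $\alpha(anc_x(\cdot))$ inside $\mathcal{CF}$, and verifying at each clause that the resulting $\beta$ genuinely stays within the non-punctual unary fragment $\mitlfp$. Once that is carefully checked, each clause is a short routine verification, because the heavy lifting on positions and guards has already been discharged by Lemma \ref{lem:ttltranspos} and Proposition \ref{prop:guard}.
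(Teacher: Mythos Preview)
Your construction is essentially the paper's: both build $\beta$ by structural induction, pass Boolean connectives and freeze through unchanged, and handle $X_\theta$/$Y_\theta$ by an $\fut$/$\past$ modality whose witness is pinned down by the position formula $\alpha(\eta_1)$ from Lemma~\ref{lem:ttltranspos}, with the guard case delegated to $\mathcal{CF}$. The one cosmetic difference is that the paper systematically conjoins $\alpha(\eta)$ to every clause (e.g.\ $\beta(\neg\eta_1)=\alpha(\eta)\land\neg\beta(\eta_1)$, $\beta(X_\theta\eta_1)=\alpha(\eta)\land\fut(\alpha(\eta_1)\land\beta(\eta_1))$), thereby maintaining the stronger invariant that $\beta(\eta)$ is false at \emph{every} position other than $pos_\rho(\eta)$; your weaker invariant --- only correct at $pos_\rho(\eta)$ --- already suffices for the theorem as stated, since you re-anchor via $\alpha(\eta_1)$ inside each modality. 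One small fix: for weakly monotonic words the child position may share the timestamp of the parent, so in the $X_\theta$, $Y_\theta$ (and $SP$/$EP$) clauses you want $F_{[0,\infty)}$ and $P_{[0,\infty)}$ rather than the open $(0,\infty)$ versions; these intervals are still non-punctual, so the translation stays in $\mitlfp$.
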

The construction of $\beta(\eta)$ is by induction on the structure of $\eta$. In its construction, we use the formula $\alpha(\eta)$ given earlier.
If $\eta=\top$ then $\beta(\eta) = \alpha(\eta)$. 
If $\eta = \theta$ then $\beta(\eta) = \alpha(\eta) \land {\mathcal CF}(\theta,\eta)$.
If $\eta = \eta_1 \lor \eta_2$ then $\beta(\eta) = \alpha(\eta) \land 
(\beta(\eta_1) \lor \beta(\eta_2))$. If $\eta = x.\eta_1$ then $\beta(\eta) =
\beta(\eta_1)$. If $\eta = \neg \eta_1$ then $\beta(\eta) = \alpha(\eta) \land \neg \beta(\eta_1)$. Now, we consider the main case.
Let $\eta= X_\theta \eta_1$. Then, $\beta(\eta) = \alpha(\eta) \land \fut(\alpha(\eta_1) \land \beta(\eta_1))$. It is easy to prove by induction on the height of $\eta$ that Theorem \ref{lem:ttlbeta} holds.

\subsection{On limited expressive power of \ttl}
Given any \ttl\/ formula, its \textit{modal depth} corresponds to the maximum number of modal operators in  any path of its parse tree and its \textit{modal count} corresponds to the total number of modal operators in the the formula.\\
A \ttl\/ formula $\phi$ is said to \textit{reach} a position $i$ in a word $w$, if there exists a subformula $\eta$ of $\psi$ such that $\pos(\eta)=i$. 

\begin{theorem}
\begin{enumerate}
\item \bmtlfp $\nsubseteq$ \ttl
\item \mitlfp $\nsubseteq$ \ttl
\end{enumerate}
\end{theorem}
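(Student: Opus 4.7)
The plan is to exploit the Unique Parsability of \ttl, by which any \ttl\ formula $\psi$ of modal count $c$ reaches at most $c + O(1)$ distinct positions in any input word. This ``reach bound'' contrasts with the existential quantification of $F_I, P_I$ in \bmtlfp\ and \mitlfp\ over all time-constraint-satisfying positions. For each part we exhibit a formula $\phi$ in the respective logic whose satisfaction depends on a position that bounded-size \ttl\ formulas cannot reach.

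For each part we construct a family of timed word pairs $(A_n, A_n^{(k)})$ with identical untimed projections, differing only in the time stamp of one position, such that $A_n \not\models \phi$ and $A_n^{(k)} \models \phi$. For (1), take $\phi_1 \df F_{[0,K]}(a \land F_{(1,2)} c)$ in \bmtlfp, packing $n$ events $a_1, \ldots, a_n$ densely (at irrational times) within a small window of $[0, K]$, and placing a single $c$-event: in $A_n$ at a time making no $a$-$c$ future distance fall in $(1,2)$; in $A_n^{(k)}$ at time $\tau_{a_k} + 1.5$, giving $a_k$ a witness at distance $1.5$. Arrange the perturbation so the new $c$-time lies in a window that does not cross any integer other than the one forced by the $(1,2)$-boundary, and so the position structure is preserved. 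For (2), use the analogous \mitlfp\ formula $\phi_2 \df F_{[0,\infty)}(a \land F_{(1,2)} c)$ with the same construction; both $\phi_1$ and $\phi_2$ are legitimate members of the respective logics.

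For any \ttl\ formula $\psi$ of size $c$, I would argue by structural induction on $\psi$ that for $n = \Omega(c^2)$ there exists $k$ rendering $\psi$'s truth value identical on $A_n$ and $A_n^{(k)}$, hence $\mathcal L(\psi) \neq \mathcal L(\phi)$. The invariant is that for every subformula, the reached position $pos_\rho(\eta)$, freeze valuation $val_\rho(\eta)$, and truth value agree across the two words. The critical case is $X_\theta$ (and its mirror $Y_\theta$): navigation to the next $\theta$-match evaluates guards at intermediate positions, possibly including the perturbed position. A two-step pigeonhole secures the induction: (i) $\psi$ reaches $\leq c + O(1)$ positions, so for $n$ larger than $c$ most $k$ have the perturbed position outside this set; (ii) a guard $x - T \approx c'$ flips at the perturbed position only when $\nu(x) - \tau_{perturbed}$ crosses the integer $c'$, which requires $\nu(x)$ (from one of $O(c)$ reached positions) combined with $c'$ (of magnitude $O(c)$) to hit the single integer crossed by the perturbation; this happens for at most $O(c^2)$ values of $k$.

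The main technical obstacle is ensuring the pigeonhole closes. Using irrational $a$-event times, freeze values drawn from $a$-positions are irrational, so equality-type guards at the perturbed position never flip. Keeping the perturbation within a half-unit window (so it crosses only one integer) caps inequality-guard flips per freeze-constant pair to a single bad $k$-value. Combining the reach bound and the flip bound excludes at most $O(c^2)$ of the $n$ available $k$-values, so for $n = \Omega(c^2)$ a safe $k$ exists; for such $k$ the inductive invariant propagates, $\psi$ evaluates identically on $A_n$ and $A_n^{(k)}$, whereas $\phi$ distinguishes them, completing both non-containments.
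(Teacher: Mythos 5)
There is a genuine gap, and it lies in the construction of the word families rather than in the bookkeeping. In both parts you make the perturbed event a \emph{unique} $c$ (the only $c$ in the word) and pack all the $a$'s into a small window, so that on your family the property to be detected degenerates to ``the unique $c$ has some $a$ at time distance in $(1,2)$ before it''. That property \emph{is} expressible by a constant-size \ttl\ formula: navigate to the unique $c$ (it is trivially reachable, e.g.\ via $EP$ or a forward modality, since it is the only $c$), freeze $y$ there, and test $Y_{(a,\; y-T>1 \wedge y-T<2)}\top$. Although $X_\theta$ and $Y_\theta$ are deterministic about \emph{where} their argument is evaluated, the formulas $X_\theta\top$ and $Y_\theta\top$ are purely existential tests for the presence of a $\theta$-position, and the guard may refer to the frozen time stamp of the perturbed $c$ itself. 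Hence one fixed small $\psi$ distinguishes $A_n$ from $A_n^{(k)}$ for every $n$ and every $k$, which falsifies your key lemma (``for $n=\Omega(c^2)$ some $k$ makes $\psi$ agree on the two words''). The pigeonhole breaks for two reasons: (a) the perturbation changes a time stamp that can itself be \emph{frozen}, so guard flips are not confined to guards evaluated at the perturbed position with freeze values taken from unperturbed positions; and (b) with all $a$'s in one small window, moving the $c$ to $\tau_{a_k}+1.5$ flips the guard $1<y-T<2$ at \emph{all} $a$-positions simultaneously, not at $O(1)$ of them, so the ``one bad $k$ per freeze--constant pair'' count does not hold.

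The paper's proof shows what is needed to defeat this kind of anchoring. For \bmtlfp\ it uses $\fut_{(0,1)}(a\wedge\fut_{[3,3]}c)$ over words with blocks $a^{2n+3}c^{2n+3}$ in which \emph{every} $a$ has a paired $c$ at distance $3+\epsilon$ (exactly $3$ only for the middle pair), and all $a$'s, respectively all $c$'s, satisfy identical integral guards with respect to any reachable time stamp; a modal-depth-$m$ \ttl\ formula can then only ``peel off'' $m$ events from either end of each homogeneous block, so the middle pair is unreachable and its perturbation invisible. For \mitlfp, where unbounded distances let \ttl\ jump deep into the word, the paper instead spreads out $4m+1$ pairs $(a,2x)(c,2x+0.5)$ and runs a modal-\emph{count} pigeonhole over $m+1$ perturbed copies $v_j$, using that each modality can match the altered $c$ in at most one $v_j$. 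Your single-pair, dense-block family hands the adversary exactly the unique anchor these constructions are designed to deny; to salvage your approach you would need many candidate witness pairs that are mutually indistinguishable by integral guards relative to every freezable position, which is essentially what the paper's two constructions arrange.
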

\begin{proof}
 
(i) Consider the \bmtlfp\/ formula $\phi:= \fut_{(0,1)}(a\land \fut_{[3,3]} c)$ given in the proof  of Theorem \ref{theo:mtlfrag2} and $A_n$ and $B_n$ be as defined in that proof. Let 
$w_n=A_{n+1}$ and $v_n=B_{n+1}$. Thus, both $w_n$ and $v_n$ consist of events $a^{2n+3}c^{2n+3}$. Then, $\forall n ~.~ w_n\not\in\mathcal L(\phi)$ and $v_n\in \mathcal L(\phi)$. 
\begin{proposition}\label{prop:ttl1}
 For $n>1$, no \ttl\/ formula of modal depth $1 \leq m\leq n$ can reach the middle $2n-2m+3$ $a$'s or the middle $2n-2m+3$ $c$'s in $w_n$.
\end{proposition}
\begin{proof}
Firstly, note that if no \ttl\/ formula of depth $m$ can reach a position $i$ in a word, then its boolean combinations also cannot reach $i$ in the word. We now prove the claim by induction on $m$, for some fixed $n$. \emph{Base step:} $m=1$ : Since all $a$ satisfy the same integral guards and
all $c$ also satisfy the same set of intergral guards, the topmost modality may match either the 
first or last $a$ or the first or last  $c$ in $w_n$ (irrespective of the guard that is chosen). Hence, the middle $2n-2+3$ $a$'s and $c$'s cannot be reached. \emph{Induction Step:} Let the proposition be true for some $1 \leq m<n$. Hence for every $\psi$ of modal depth $m$, $\psi$ cannot reach the middle $(2n-2m+3)$ $a$'s and $c$'s in $w_{n}$. Every \ttl\/ formula $\psi'$ of modal depth $m+1$ may be obtained from some \ttl\/ formula $\psi$ of modal depth $m$, by extending every path in parse tree of $\psi$ by at most one modality in the end. However, since all the middle $2n-2m+3$ $a$'s and $c$'s satisfy the same integral guards with respect to the time stamps of the peripheral $a$'s and $c$'s, adding another modality to $\psi$ can make $\psi'$ reach at most the $(m+1)^{th}$ or $n-(m+1)^{th}$ $a$ or $c$. This leaves us with $2n-2m+3-2 ~=~ 2n-2(m+1)+3$ middle $a$'s and $c$'s which remain unreachable.
\qed
\end{proof}
Consider a \ttl\/formula of modal depth $\leq n$. From proposition \ref{prop:ttl1},  the middle 3 $a$'s and $c$'s are unreachable. Moreover, they satisfy the same set of time constraints with respect to the reachable events. Hence, perturbing the middle $c$ alone  will not change the truth of the formula as $c$ it will continue to satisfy the same set of timing constraints w.r.t. the reachable events. Hence $w_n\models \psi$ iff $v_n\models\psi$. Since $w_n\not\models\phi$ and $v_n\models\phi$, no $\psi$ of modal depth $\leq n$ can distinguish between $w_n$ and $v_n$. Hence, we can conclude that there is no \ttl\/ formula equivalent to $\phi$.\\
  \\
(ii)Consider the \mitlfp\/ formula $\phi:= \fut_{[0,\infty)}(a\land \fut_{(1,2)} c)$ and let $\psi$ be a \ttl\/ formula of modal count $m$ such that $\mathcal L(\phi)=\mathcal L(\psi)$. Assuming that freeze variables in $\psi$ are not reused, there are a maximum number of $m$ freeze variables in $\psi$. 
Now consider the word $w$  consisting of event sequence $(ac)^{4m+1}$ where the $x$'th $ac$ pair gives the timed subword $(a,2x)(c,2x+0.5)$. Thus, each $c$ is 0.5 t.u. away from its paired $a$, and $2.5$ units away from the $a$ of the previous pair. Hence, $w\not\in\mathcal L(\phi)$. \\
Consider the evaluation of $\psi$ over $w$. Each of the $m$ freeze variables is frozen at most once, in the evaluation of $\psi$. By a counting argument, there are at least $m+1$ (possibly overlapping but distinct) subwords of the form $acacac$, none of whose elements are frozen. Call each such subword a group. Enumerate the groups sequentially.
Let $v_j$ be a word identical to $w$ except that the $j^{th}$ group is altered, such that its middle $c$ is shifted by 0.7 t.u. to the right, so that $v_j$ satisfies the property $\phi$. Note that there are at least $m+1$ such distinct $v_j$'s and for all $j$, $v_j\in\mathcal L(\phi)$. \\
\emph{Claim:} Given a $v_j$, if there exists a subformula $\eta$ of $\psi$ such that $Pos_{v_j}(\eta)$ matches the altered $c$, then for all $k\neq j$, $Pos_{v_k}(\eta)$ does not match its altered $c$. (This is because, the altered $c$ in $v_j$ must satisfy a guard which none of its two surrounding $c$'s in the group can satisfy). \\
From the above claim, we know that the $m$ modalities in $\psi$, may match its position in at most $m$ of the altered words $v_j$. However, the family $\{v_j\}$ has at least $m+1$ members. Hence, there exists a $k$ such that the altered $c$ of $v_k$, (and the $k^{th}$ group) is not reachable 
by $\psi$ in $w$ or any of the $\{v_j\}$. Hence $w\models\psi$ iff $v_k\models \psi$. \\
Therefore, there is no \ttl\/ formula which can express the language $\mathcal L(\phi)$.
\qed
\oomit{  
(iii)Consider the language defined by the \bmitlus\/ formula $\phi := \top \until_{(1,2)}[b\land \{b\until_{(0,1)}(c\land (c\until_{(0,1)}d))\}]$.
In the proof of Theorem \ref{theo:mtlfrags} in Appendix B, we show that $L(\phi)$ is not expressible 
in \mtlfp.  Also, by Theorem \ref{lem:ttlbeta}, \ttl\/ is a sub logic of \mtlfp. Hence, we infer that the language defined by $\phi$ is not expressible in \ttl.
}
\end{proof}


\newpage
\appendix


\section{$\INTV$ \mtlus\/ EF theorem}
\label{app:efthmproof}

\begin{lemma}
\label{lem:wordlenboundform}
For any $\phi \in \mtlus$ and any integer $n$, let $[\phi]^{n}$ denote the formula obtained by replacing in $\phi$ any occurrence 
of any constant $c> n$  (or $\infty$) by $n$. Let $\rho$ be a timed word and let integer $k > \tau_{\#\rho}$.
Thus, $k$ is an integer strictly larger than the last time stamp in $\rho$.
Then, $\forall i$ . 
 $\rho,i \models \phi \iff \rho,i \models [\phi]^k$. \qed
\end{lemma}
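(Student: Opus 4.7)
\medskip
\noindent\textbf{Proof plan.} The plan is a straightforward structural induction on $\phi$, with the only substantive content concentrated in a preliminary observation about interval membership. The key remark is that, because $\tau_1 = 0$ and time stamps are non-decreasing, for every pair of positions $i,j \in dom(\rho)$ we have $0 \leq |\tau_j - \tau_i| \leq \tau_{\#\rho} < k$. Hence every time difference that can arise during the evaluation of a subformula of $\phi$ over $\rho$ lies in the half-open interval $[0,k)$.

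\smallskip
\noindent\textbf{Interval lemma.} First I would prove the following auxiliary fact: for every interval $I \in \zintv$, writing $I^k$ for the interval obtained from $I$ by replacing every endpoint $>k$ (or $\infty$) by $k$ while preserving its open/closed status, one has $t \in I \iff t \in I^k$ for every $t \in [0,k)$. The argument splits into cases on the two endpoints $c_1, c_2$ of $I$. If $c_2 \leq k$ (resp.\ $c_1 \leq k$) the corresponding endpoint is unchanged. If $c_2 > k$, then since $t < k < c_2$ the upper bound of $I$ is automatically satisfied, and likewise the upper bound $k$ of $I^k$ is satisfied because $t < k$; so the upper bound imposes no constraint in either interval. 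If $c_1 > k$ then $I$ is empty on $[0,k)$ (since $t < k < c_1$) and $I^k = \langle k,k \rangle$ with the relevant bracket choice is also empty on $[0,k)$. In all cases membership in $I$ and in $I^k$ agree on $[0,k)$.

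\smallskip
\noindent\textbf{Structural induction.} I would then prove $\rho,i \models \phi \iff \rho,i \models [\phi]^k$ by induction on the structure of $\phi$. The atomic case $\phi = a$ is immediate since $[a]^k = a$, and the Boolean cases follow directly from the induction hypothesis. For $\phi = \phi_1 \until_I \phi_2$, note that $[\phi]^k = [\phi_1]^k \until_{I^k} [\phi_2]^k$. Unfolding the semantics, $\rho,i \models \phi$ says that there exists $j > i$ with $\rho,j \models \phi_2$, $\tau_j - \tau_i \in I$, and $\rho,\ell \models \phi_1$ for all $i < \ell < j$. By the induction hypothesis, $\rho,j \models \phi_2 \iff \rho,j \models [\phi_2]^k$ and similarly for $\phi_1$; and by the interval lemma applied to $t = \tau_j - \tau_i \in [0,k)$, $\tau_j - \tau_i \in I \iff \tau_j - \tau_i \in I^k$. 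The three conditions thus match exactly those defining $\rho,i \models [\phi]^k$. The $\since_I$ case is symmetric, using that $0 \leq \tau_i - \tau_j < k$.

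\smallskip
\noindent\textbf{Expected obstacle.} There is no real obstacle: the proof is routine once the interval lemma is stated correctly. The only mildly fiddly point is handling the combinatorics of open versus closed endpoints (and $\infty$) uniformly in the interval lemma, which I would dispatch with the case analysis sketched above rather than any clever device.
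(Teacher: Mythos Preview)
Your proof is correct. The paper, in fact, states this lemma without proof: the statement is immediately closed with a \qed\ and the appendix proceeds directly to the EF theorem. Your structural induction together with the interval-membership observation is exactly the routine argument the authors are taking for granted, so there is nothing to compare against beyond noting that you have spelled out what the paper leaves implicit.
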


\noindent{\bf Theorem \ref{thm:ef}}
 $(\rho_0,i_0) \gameeq{k}^{\intv} (\rho_1,i_1)$ if and only if $(\rho_0,i_0) \formeq{k}^{\INTV} (\rho_1,i_1)$.
 
\begin{proof}
Let $\gameeq{k}$ and $\formeq{k}$ denote $\gameeq{k}^{\INTV}$ and $\formeq{k}^{\INTV}$, respectively.
The proof is by induction on $k$. For $k=0$ the result is immediate; since for any 0-modal depth atomic formula $a \in \Sigma$, \ddp\/ wins the 0-round game iff $\sigma_0[i_0]=\sigma_1[i_1]$. As induction step, we now prove the result for $k+1$ assuming that it holds for $k$.

($\Rightarrow$) Assume that $(\rho_0,i_0) \gameeq{k+1} (\rho_1,i_1)$. We consider the representative case of  $\rho_0,i_0 \models \phi$ for a $k+1$-modal depth formula whose topmost operator is $\until_I$. Our aim is to show that $\rho_1,i_1 \models \phi$.
Similar argument holds when the topmost operator is $\since_I$ or the role of the two words is reversed. Hence, the theorem holds for the boolean combinations such formulas completing the proof.

Since $\rho_0,i_0 \models \psi \until_I \gamma$ there exists $i'_0>i_0$ such that $\rho_0,i'_0 \models \gamma$ and $\forall i''_0: i_0<i''_0 < i'_0.~ \rho_0,i''_0 \models \psi$. Now, one possible play of \ssp\/ is to choose $\delta=0$ and $\until_I$ move with position $i'_0$, followed by the $F$-part. The \ddp\/ has a winning strategy by which he can choose position $i'_1>i_1$ in $\rho_1$ such that $\rho_0,i'_0 \gameeq{k} \rho_1,i'_1$. By induction hypothesis, we have $\rho_0,i'_0 \formeq{k} \rho_1,i'_1$ and since $\rho_0,i'_0 \models \gamma$ for the $k$-modal depth formula $\gamma$, it follows from formula equivalence that $\rho_1,i'_1 \models \gamma$. Another play of $\ssp$ is to choose $U$-part in above giving position $i''_1$ to which \ddp\/ can respond with winning strategy by choosing
$i''_0$ such that $\rho_0,i''_0 \gameeq{k} \rho_1,i''_1$. By induction hypothesis,
we have $\rho_0,i''_0 \formeq{k} \rho_1,i''_1$ and since
$\rho_0,i''_0 \models \psi$ for the $k$-modal depth formula $\psi$, it follows from formula equivalence that $\rho_1,i''_1 \models \psi$.

$(\Leftarrow$) Assume that $(\rho_0,i_0) \not\gameeq{k+1} (\rho_1,i_1)$. We must find a $k+1$-modal depth formula distinguishing the two structures. 
The choice of the formula depends upon the play. We consider the interesting case by which \ssp\/ first plays an
$\until_I$ move with  word $\delta$ and wins the $k+1$-round game. Let the position chosen by \ssp\/ in Part I be $i'_\delta$. 
Let $m= \lfloor max(\tau_0[\#\rho_0],\tau_1[\#\rho_1]) \rfloor +1$, i.e. $m$ is the smallest integer strictly greater than the last time stamps of $\rho_0$ and $\rho_1$. For 
$j \in dom( \rho_\delta)$, let $\phi^{k,m}_j$ be conjunction of all $\mtlus$ formulas $\zeta$ with $\maxint(\zeta) \leq m$ and modal depth $k$ such that $\rho_\delta,j \models \zeta$. Note that, up to equivalence, there are only finitely many such formulas $\zeta$ and the conjunction can be written as an $\mtlus$ formula. The key property which follows from
Lemma \ref{lem:wordlenboundform} is that if 
$\rho_\deltabar,i \models \phi^{k,m}_j$ then $(\rho_\delta,j)$ and $(\rho_\deltabar,i)$
satisfy the same set of $k$-modal depth \mtlus\/ formulas.

Now consider the formula
\[
\psi = (\phi^{k,m}_{i_\delta + 1} \lor \cdots \lor \phi^{k,m}_{i'_\delta - 1}) 
~\until_I~ (\phi^{k,m}_{i'_\delta}) 
\]
Then, $\psi$ has modal depth $k+1$. For simplicity denote $\psi$ as $P \until_I Q$.
By construction it is clear that $\rho_\delta,i_\delta \models \psi$. Now, we claim that if $\rho_\deltabar, i_\deltabar \models \psi$ then $\ddp$ would have won the game in which \ssp\/ made the Part-I $\until_I$ move at position $\rho_\delta,i'_\delta$. This is a contradiction
and hence $\rho_\deltabar, i_\deltabar \not \models \psi$. To see the claim, let $i'_\deltabar > i_\deltabar$ be the position such that $\rho_\deltabar,i'_\deltabar \models Q$. 
The \ddp\/ would respond to the Part-I move by choosing $i'_\deltabar$. Now, by definition of $Q$, $(\rho_\deltabar, i'_\deltabar)$ satisfies same set of $k$-modal depth formulas as satisfied by $(\rho_\delta,i'_\delta)$. Hence, $(\rho_0,i'_0) \formeq{k} (\rho_1,i'_1)$. By inductive hypothesis, then $(\rho_0,i'_0) \gameeq{k} (\rho_1,i'_1)$ and the \ddp\/ can force a win from this configuration. Hence, \ddp\/ would win if \ssp\/ chose $F$-part in first round.
Assume that \ssp\/ chooses the $U$-part and chooses a position $i''_\deltabar$ s.t.
$i_\deltabar < i''_\deltabar < i'_\deltabar$. Clearly, $\rho_\deltabar,i''_\deltabar \models P$ and hence $\rho_\deltabar,i''_\deltabar \models \phi^{k,m}_l$ for some $l$ with $i_\delta < l < i'_\delta$. The \ddp\/ responds to $U$-part by choosing $i''_\delta = l$.
Clearly, $(\rho_\deltabar,i''_\deltabar)$ and $(\rho_\delta,i''_\delta)$ satisfy the same set of $k$-modal depth formulas. Hence they are $\formeq{k}$ and ,using Induction Hypothesis,
$\gameeq{k}$. Hence, the \ddp\/ can force a win from these configurations.
\qed
\end{proof}

\section{Separating sub logics of \mtlus\/ and \tptlus}
\label{app:mtlus}

For timed logics, weakly monotonic time includes \emph{instantaneous} timed words where all its letters occur at the initial time point $0$. 
Logic $\bmtlus^0$ denotes $\mtlus$ where all the $\until_I$ and $\since_I$ and modalities only have the interval $I=[0,0]$.
We have the following property.
\begin{proposition}
 Over instantaneous timed words, (a) $\mtlus \equiv \bmtlus^0$, 
 (b) $\mtlfp \equiv \bmtlfp^0$, (c) $\tptlus \equiv \bmtlus^0$, and (d) $\tptlfp \equiv \bmtlfp^0$.
\end{proposition}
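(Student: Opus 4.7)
The key observation is that on an instantaneous timed word $\rho$, we have $\tau_i = 0$ for every $i \in dom(\rho)$, so $\tau_j - \tau_i = 0$ for every pair of positions. Consequently, any interval constraint $\tau_j - \tau_i \in I$ collapses to the position-independent boolean $0 \in I$, and any freeze value $\nu(x) = \tau_k$ assigned by a freeze binder $x.$ is forced to be $0$ (and the initial valuation $\nu_0$ already assigns $0$ to every variable).

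For part (a), the containment $\bmtlus^0 \subseteq \mtlus$ is immediate by syntax, so I would prove $\mtlus \subseteq \bmtlus^0$ by induction on the structure of $\phi \in \mtlus$. The atomic and boolean cases are trivial. For $\phi = \phi_1 \until_I \phi_2$: if $0 \in I$, then on any instantaneous word $\rho,i \models \phi_1 \until_I \phi_2$ iff $\rho,i \models \phi_1 \until_{[0,0]} \phi_2$, because on both sides the interval constraint $\tau_j - \tau_i \in I$ (respectively $[0,0]$) is automatically satisfied. If $0 \notin I$, the formula is unsatisfiable on instantaneous words and hence equivalent to $\bot$, which is expressible in $\bmtlus^0$. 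The $\since_I$ case is symmetric. Statement (b) follows by restricting this translation to the unary fragment ($\fut_I, \past_I$).

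For parts (c) and (d), I would show that on instantaneous words every $\tptlus$ formula $\phi$ is equivalent to a $\bmtlus^0$ formula. The key lemma is a structural induction on guards $g$ showing that, whenever the current valuation $\nu$ satisfies $\nu(x) = 0$ for every freeze variable $x$ occurring in $g$ and the current time is $\tau_i = 0$, the truth value of $\nu,\tau_i \models g$ depends only on the integer constants appearing in $g$ (each atom $x - T \approx c$ reduces to $0 \approx c$). Since all valuations arising during the evaluation of $\phi$ on an instantaneous word satisfy this hypothesis, every guard becomes a constant $\top$ or $\bot$. I then inductively translate $\phi$ to an untimed formula $\phi'$ over $\until, \since$ by replacing each guard by its boolean value and deleting every freeze binder $x.$; since on instantaneous words the untimed $\until$ and $\since$ coincide with $\until_{[0,0]}$ and $\since_{[0,0]}$ respectively, $\phi'$ in fact lies in $\bmtlus^0$. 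Part (d) follows by confining this translation to the unary fragment $\fut, \past$. The converse containments are immediate from $\bmtlus^0 \subseteq \mtlus \subseteq \tptlus$ and the easy translation of $\mtlfp$ into $\tptlfp$.

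The only point requiring care is the guard-elimination step in the $\tptlus$ cases: one must verify by structural induction that at the position where a guard is evaluated, every freeze variable referenced by that guard has already been bound (or is still $\nu_0$) with value $0$. Since every assignment and every time stamp is $0$ on instantaneous words, this bookkeeping is routine and constitutes the only (mild) obstacle in the argument.
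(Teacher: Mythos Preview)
Your argument is correct and follows essentially the same approach as the paper: collapse every interval constraint to the boolean $0 \in I$ (replacing $\until_I$ by $\until_{[0,0]}$ when $0\in I$ and by $\bot$ otherwise), and for the freeze logics observe that all freeze variables and the current time are identically $0$ so every guard becomes a constant and every $x.$ can be dropped. Your case split on $0\in I$ is in fact slightly cleaner than the paper's phrasing, but the content of the two proofs is the same.
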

To see these, notice that over instantaneous timed words the modalities $\until_{[0,i\rangle}$ are all equivalent for $i \geq 0$ (incl. $\infty$). Moreover, modalities
$\until_{\langle i,j \rangle}$ with $i>0$ all evaluate to ``false''. Hence, we can reduce every $\mtlus$ formula to $\bmtlus^0$ formula (of same modal depth) which uses only the interval $[0,0]$. In an analogous fashion it is also easy to reduce a $\tptlus$ formula to equivalent $\bmtlus^0$
formula using only the interval $[0,0]$.
Note that every freeze quantification over instantaneous word trivially sets a variable to $0$.
Hence, a guard $g$ can be replaced by its truth value found by setting  all variables to $0$. Freeze quantification can be omitted and every $\until$ or $\since$ can be replaced by $\until_{[0,0]}$ or  $\since_{[0,0]}$.
Moreover, unary modalities are preserved by these reductions.

Now, logic $\bmtlus^0$ over instantaneous words is semantically isomorphic to $\ltlbin$ over the corresponding untimed word. Hence, all the $\ltl$ separations carry over to logics MTL and TPTL over weakly monotonic time (which include instantaneous words).
Etessmai and Wilke established an until hierarchy within LTL formulas.
Specifically, they have shown that $LTL[\until,\since] \not \subseteq LTL[\fut,\past]$. From
this, and above argument, we immediately conclude that
\begin{theorem} Over weakly monotonic timed words,
 \begin{itemize}
  \item $\bmtlus \not \subseteq \mtlfp$ 
  \item $\bmtlus \not \subseteq \tptlfp$.
 \end{itemize}
\end{theorem}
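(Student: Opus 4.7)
The plan is to leverage the collapse of all four logics to their $[0,0]$-bounded fragments on instantaneous timed words, and then transport the classical Etessami--Wilke until-hierarchy separation from LTL to the timed setting. The excerpt has already done most of the scaffolding via the proposition stating that over instantaneous words $\mtlus \equiv \bmtlus^0$, $\mtlfp \equiv \bmtlfp^0$, $\tptlus \equiv \bmtlus^0$, and $\tptlfp \equiv \bmtlfp^0$. So the proof essentially reduces to citing the Etessami--Wilke fact $\ltlbin \not\subseteq \ltlun$ and checking that it transfers verbatim through the isomorphism between instantaneous timed words and untimed words.

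First I would fix an LTL $[\until,\since]$ formula $\chi$ whose language $L(\chi) \subseteq \Sigma^*$ is not definable in $\ltlun$, as guaranteed by the Etessami--Wilke until-hierarchy theorem. Lift $\chi$ to a $\bmtlus$ formula $\hat{\chi}$ by replacing every $\until$ and $\since$ in $\chi$ by $\until_{[0,0]}$ and $\since_{[0,0]}$ respectively; note $\hat{\chi}$ uses only the bounded interval $[0,0]$, so indeed $\hat{\chi} \in \bmtlus$. Its language over weakly monotonic timed words contains, in particular, the class of instantaneous timed words $\iota(w)$ obtained by time-stamping every letter of $w \in \Sigma^*$ with $0$, and the evaluation $\iota(w) \models \hat\chi$ is semantically isomorphic to $w \models \chi$.

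Next, suppose for contradiction that some formula $\psi \in \mtlfp$ (resp.\ $\psi \in \tptlfp$) defines the same language as $\hat\chi$. Applying the collapse proposition to $\psi$ restricted to the instantaneous fragment of its language, I obtain an equivalent $\bmtlfp^0$ formula $\psi'$ of the same modal depth, which by the same semantic isomorphism corresponds to an $\ltlun$ formula $\chi'$ with $L(\chi') = L(\chi)$. This contradicts the Etessami--Wilke separation, and yields both bullets simultaneously since the $\tptlfp$ case also reduces to $\bmtlfp^0$ by the proposition.

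The only genuinely non-routine step is verifying the collapse proposition itself in the unary $\tptlfp$ direction, which is where freeze quantification must be shown to become trivial on instantaneous words: every freeze $x.\phi$ binds $x$ to $0$, so each guard $g$ reduces to a ground truth value at every position, and each unary $\fut \phi$ (resp.\ $\past \phi$) is equivalent to $\fut_{[0,0]}\phi$ (resp.\ $\past_{[0,0]}\phi$); the excerpt already sketches this reduction, so I would just expand the induction on formula structure a bit more carefully, taking care that the reduction preserves the unary-modality restriction. Modulo this essentially mechanical check, the theorem follows immediately from the classical LTL hierarchy result.
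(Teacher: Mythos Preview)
Your proposal is correct and follows essentially the same route as the paper: reduce to instantaneous timed words, invoke the collapse proposition to identify $\bmtlus^0$ with $\ltlbin$ and $\mtlfp/\tptlfp$ with $\ltlun$ via the semantic isomorphism, and then transport the Etessami--Wilke separation $\ltlbin \not\subseteq \ltlun$ back through this identification. The paper states the argument more tersely (simply asserting that ``all the LTL separations carry over''), while you spell out the contrapositive explicitly, but the content is the same.
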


We now consider strictly monotonic time. A timed word is called \emph{unitary} if all its letters occur within the open interval $(0,1)$ and with distinct time stamps (i.e. it is strictly monotonic).
Let $\bmtlus^{(0,1)}$ denote $\mtlus$ where the only interval used in the modalities is $I=(0,1)$. This is a subset of $\bmtlus^1$.
\begin{proposition}
Over unitary timed words, $\mtlus \equiv \bmtlus^{(0,1)}$ and $\mtlfp \equiv \bmtlfp^{(0,1)}$ 
\end{proposition}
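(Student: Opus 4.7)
The plan is to give a compositional, modal--depth--preserving syntactic translation
$\widehat{(\cdot)}$ from $\mtlus$ into $\bmtlus^{(0,1)}$ that is semantically invariant on unitary timed words. The reverse containment $\bmtlus^{(0,1)} \subseteq \mtlus$ is immediate because $\bmtlus^{(0,1)}$ is a syntactic fragment of $\mtlus$, so all the work lies in one direction.

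The key observation is the following. On any unitary timed word $\rho$, every time stamp lies in the open interval $(0,1)$ and all time stamps are distinct, hence for every pair of positions $i < j$ in $\mathrm{dom}(\rho)$ we have $\tau_j - \tau_i \in (0,1)$. Since every interval $I$ appearing in an $\mtlus$ formula has endpoints in $\mathbb{N} \cup \{\infty\}$, a simple case analysis shows that $I \cap (0,1) \in \{\emptyset,\, (0,1)\}$. Concretely, $I \cap (0,1) = \emptyset$ exactly when $I$ lies in $(-\infty,0]$ (e.g.\ $[0,0]$) or in $[1,\infty)$ (e.g.\ $[1,2]$, $[2,\infty)$), and $I \cap (0,1) = (0,1)$ in all remaining cases (e.g.\ $[0,1]$, $(0,1)$, $[0,\infty)$).

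Define $\widehat{(\cdot)}$ by induction on formula structure: it commutes with atoms and boolean connectives; for the modal cases, set
$\widehat{\psi_1 \until_I \psi_2} := \bot$ and $\widehat{\psi_1 \since_I \psi_2} := \bot$ whenever $I \cap (0,1) = \emptyset$ (here $\bot$ is encoded as $a \land \neg a$ for any $a \in \Sigma$), and otherwise set $\widehat{\psi_1 \until_I \psi_2} := \widehat{\psi_1} \until_{(0,1)} \widehat{\psi_2}$ and analogously for $\since_I$. A routine structural induction then shows that for every unitary $\rho$ and every $i \in \mathrm{dom}(\rho)$, $\rho,i \models \phi \iff \rho,i \models \widehat{\phi}$; the only nontrivial step is the until/since case, where one uses directly the interval dichotomy together with the fact that the witness difference $\tau_j - \tau_i$ is forced to lie in $(0,1)$. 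Since $\widehat{\phi}$ uses only the interval $(0,1)$, it belongs to $\bmtlus^{(0,1)}$. The translation also preserves unarity, because $\fut_I \psi = \top \until_I \psi$ is mapped either to $\bot$ or to $\fut_{(0,1)} \widehat{\psi}$ (and similarly for $\past_I$), yielding $\mtlfp \equiv \bmtlfp^{(0,1)}$ by the same argument.

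No serious obstacle is expected: the proof is a semantic collapse of integer-endpoint interval constraints under the restriction that all positive time differences live in the single unit interval $(0,1)$. The main thing to be careful about is only the case analysis for $I \cap (0,1)$ at boundary intervals such as $[0,0]$ or $[1,1]$, but these fall into the empty case and are handled uniformly by the $\bot$ clause.
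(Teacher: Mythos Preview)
Your proposal is correct and follows essentially the same approach as the paper: the paper's argument is the one-line observation that over unitary words every modality $\until_I$ is equivalent to $\until_{(0,1)}$ when $(0,1)\subseteq I$ and to \emph{false} otherwise (and that this preserves unary modalities), which is exactly your interval dichotomy $I\cap(0,1)\in\{\emptyset,(0,1)\}$ turned into an explicit inductive translation.
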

To see this, over unitary timed words every modality $\until_I$ is equivalent to $\until_{(0,1)}$ if $(0,1) \subseteq I$, and equivalent to ``false'' otherwise. Moreover this preserves unary modalities. 

Now, logic $\bmtlus^{(0,1)}$ over unitary timed words is semantically isomorphic to $\ltlbin$ over the corresponding untimed word. Hence, all the $\ltl$ separations carry over to the logic MTL over strictly monotonic timed words (which include unitary words). Specifically, as  we have $LTL[\until,\since] \not \subseteq LTL[\fut,\past]$, we conclude the following.
\begin{theorem}
 Over strictly monotonic timed words, $\bmtlus 
 \not \subseteq \mtlfp$.
\end{theorem}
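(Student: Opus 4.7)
The plan is to transfer the classical Etessami--Wilke separation $\ltlbin \not\subseteq \ltlun$ to the timed setting by using \emph{unitary} timed words (strictly monotonic words with every letter occurring inside the open interval $(0,1)$) as encodings of arbitrary untimed words. First I would invoke the proposition stated immediately above the theorem, which gives $\mtlus \equiv \bmtlus^{(0,1)}$ and $\mtlfp \equiv \bmtlfp^{(0,1)}$ over unitary words. The underlying reason is that every pair of distinct positions in a unitary word has a time difference lying strictly in $(0,1)$, so each $\until_I$ reduces either to $\until_{(0,1)}$ (when $(0,1) \subseteq I$) or to $\bot$ (otherwise); the same holds for $\since_I$ and the unary modalities, and the reduction preserves both modal depth and the unary/binary syntactic shape.

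Next I would observe that, via the untiming map $(\overline{\sigma},\overline{\tau}) \mapsto \overline{\sigma}$, the logic $\bmtlus^{(0,1)}$ interpreted over unitary words is semantically isomorphic to $\ltlbin$ interpreted over the corresponding untimed word, and likewise $\bmtlfp^{(0,1)}$ corresponds to $\ltlun$. This is because the sole remaining timing constraint ``time difference in $(0,1)$'' is automatic between any two distinct positions of a unitary word, so untimed and unitary-timed semantics agree step for step under the bijection that forgets time stamps.

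To conclude, fix (by Etessami and Wilke) an $\ltlbin$ formula $\phi_0$ whose untimed language $L_0 \subseteq \Sigma^*$ is not definable in $\ltlun$. Let $\phi_0'$ be the $\bmtlus$ formula obtained by decorating each $\until$ and $\since$ in $\phi_0$ with the interval $(0,1)$, and put $L' := L(\phi_0') \in \bmtlus$. Suppose for contradiction that $L' = L(\psi)$ for some $\mtlfp$ formula $\psi$; applying the collapse to $\psi$ and then the untiming isomorphism yields an $\ltlun$ formula $\overline{\psi}$. For any untimed $u \in \Sigma^*$, lifting to any unitary $w$ with $\mathrm{untime}(w)=u$ yields the chain $u \in L_0 \Leftrightarrow w \in L' \Leftrightarrow w \in L(\psi) \Leftrightarrow u \in L(\overline{\psi})$, so $L_0 = L(\overline{\psi})$, contradicting the choice of $L_0$. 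The main obstacle is simply making the two collapses rigorous --- a routine induction on formula structure that checks the case analysis on integer-endpoint intervals against $(0,1)$, together with verifying that the untiming map preserves satisfaction of the reduced formulas --- after which the theorem is a transparent import of the untimed separation, with no EF game required for this direction.
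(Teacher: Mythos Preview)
Your proposal is correct and follows essentially the same route as the paper: both restrict attention to unitary timed words, invoke the proposition that over such words $\mtlus \equiv \bmtlus^{(0,1)}$ and $\mtlfp \equiv \bmtlfp^{(0,1)}$, use the resulting semantic isomorphism with $\ltlbin$ and $\ltlun$ on the untimed projection, and then import the Etessami--Wilke separation $\ltlbin \not\subseteq \ltlun$. Your write-up is in fact a bit more explicit than the paper's about the contradiction-by-restriction step (first equating $L'$ and $L(\psi)$ globally, then restricting to unitary words to obtain the untimed equivalence), but the underlying argument is identical.
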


\subsection{Theorem \ref{theo:mtlfrag2}: Strategy of the Duplicator}
As stated before, we consider $n$ round game on pair of words $A_{2n}$ and $B_{2n}$ 
described earlier. (Recall these definitions.) Both words are identical except for the time stamp of the middle $c$. A game configuration $(i,j)$ is written by putting $*$ before $i^{th}$ letter in the first word $A_{2n}$ and $j$'th letter in the second word $B_{2n}$. The initial configuration is $(* a^{4n+1}  c^{4n+1}, ~*a^{4n+1}  c^{4n+1})$. 
\begin{itemize}
 \item As the two words are identical except the time stamp of the middle (i.e. 2n+1th) $c$, the strategy of \ddp\/ is to play a configuration of the form $(i,i)$ whenever possible. (Duplicator wins the round if she can achieve such a configuration.) Such a configuration $(i,i)$ is called an identical configuration.
 \item The optimal strategy of \ssp\/ is to get out of identical configurations as quickly as possible. \ssp\/ can play middle $a$ in $B_{2n}$ to which \ddp\/ responds in copycat fashion
 giving configuration $(2n+1,2n+1)$. Next, \ssp\/ can move to middle $c$ in $A_{2n}$ with an 
 $\until_I$ move. Note that distance between middle $a$ and middle $c$ in $A_{2n}$ is 
 $3+\epsilon$ whereas distance between middle $a$ and middle $c$ is $3$  in $B_{2n}$. 
 The move of \ddp\/ depends upon the interval chosen by \ssp. If \ssp\/ chooses an 
 interval $I= [3,j \rangle$, the \ddp\/ responds by moving pebble to middle ``c'' giving an identical configuration. A better move of \ssp\/ is to
 choose interval $I=(3,j\rangle)$. \ddp\/ is forced to place her pebble somewhere after the middle $c$. The optimal move of \ddp\/ is to place pebble at position next to the middle $c$ giving the non-identical configuration $(a^{4n+1}c^{2n}*cc^{2n},~a^{4n+1}c^{2n}c*c^{2n})$. In this configuration, the distance of the pebbles from the right end differs by 1. (We remark that there are several alternative plays of 2 moves where \ssp\/ can achieve a non-identical configuration where the distance of the pebbles from the nearer end of the block of $a$ or $c$ differs by 1. \ssp\/ cannot do better than this.)
 \item The optimal strategy of spoiler from the above non-identical configuration is  to play a sequence of ``until'' moves placing $A_{2n}$ pebble 2 positions to right in each move. (If \ssp\/ moves more than 2 positions the \ddp\/ can immediately achieve an identical configuration. Moving less than 2 position makes the game last longer.) The optimal move of \ddp\/ is to do the same by moving $B_{2n}$ pebble 2 positions to the right. (Note that due to the conditions of until move, the \ddp\/ is forced to move his pebble two or more positions to the right.) Such a move results in a non-identical configuration where pebble distance from the right end decreases. The \ddp\/ can sustain this for $n-1$ ``until'' moves of \ssp\/ finally giving the configuration 
 $(a^{4n+1}  c^{4n-2}  * ccc, ~a^{4n+1}  c^{4n-1}  * cc)$. The next ``until'' move of the spoiler 2 positions to right in $A_{2n}$ cannot be duplicated. \qed 
\end{itemize}

\oomit{
\begin{example}
 Let timed word $A_n=(a\cdot c^{2n} \cdot c^{2n},\tau)$ such that $a$ occurs at time $0$ and remaining $c^{4n}$ occur in the open interval $(0,1)$ with each letter $c$ having distinct time stamp (strict monotonicity). Let $B_n$ be identical to $A_n$ but with an additional $c$ occurring
 some time strictly between two $c^{2n}$ blocks. Thus $B_n=(a\cdot c^{2n} \cdot c \cdot c^{2n},\tau')$. Then
 \begin{itemize}
  \item Duplicator has a winning strategy for  $n$ round $\zintv \mtlus$ game. Thus no $n$ modal depth $\zintv \mtlus$ formula can distinguish the two words. 
  \item Spoiler has a winning strategy for $n+1$ round $\bintv^1 \mtlus$ game. Thus, there exists a $\bintv^1 \mtlus$ formula distinguishing these words.
 \end{itemize} 
 \end{example}
 
 \begin{proof} A game configuration $(i,j)$ is written by putting $*$ before $i^{th}$ letter in the first word and $j$'th letter in the second word. The initial configuration is $(* a\cdot c^{2n} \cdot c^{2n}, ~* a\cdot c^{2n} \cdot c \cdot c^{2n})$. 
\begin{itemize}
 \item The time distance between any two letters lies in open interval $(0,1)$. Hence in any move
 the spoiler is forced to choose interval $(0,1)$ or its superset which duplicator can match. 
 Thus, the time constraints do not affect the moves of the duplicator in any way in this game. 
 \item The initial configuration as well as configurations of the form 
 $a \cdot c^p *c^{2n-p} \cdot c^{2n}, ~a \cdot c^p *c^{2n-p} \cdot c \cdot c^{2n})$ with 
 $0 \leq p < 2n$ or 
 $a \cdot c^{2n} \cdot c^{2n-q} * c^q, ~a \cdot c^{2n} \cdot c \cdot c^{2n-q} * c^q)$ with
 $0 < q \leq 2n$ are called \emph{copycat} configurations. Here the distance of the pebbles
 from the nearer end is the same.
 \item The duplicator's strategy is to play copycat configuration whenever possible. Optimal strategy of the spoiler is to place pebble in the first move at ``middle'' $c$ in $B_n$. To this an optimal move of  duplicator is to play configuration $(a \cdot c^{2n} \cdot *c^{2n}, ~a \cdot c^{2n} \cdot *c \cdot c^{2n})$. (There is the symmetric case where the duplicator can also play  backwards from middle $c$ which we disregard.) Note that this not a copycat configuration.
 \item The optimal strategy of spoiler from the above non-copycat configuration is  to play a sequence of ``until'' moves placing $B_n$ pebble 2 positions to right in each move. (If spoiler moves more than 2 positions the duplicator can immediately achieve a copycat configuration. Moving less than 2 position makes the game last longer.) The optimal move of the duplicator is to do the same by moving $A_n$ pebble 2 positions to the right. (Note that due to the conditions of until move, the duplicator is forced to move his pebble two or more positions to the right.) Such a move results in a non-copycat configuration where pebble distance from the right end decreases. The duplicator can sustain this for $n-1$ ``until'' moves of spoiler finally giving the configuration $(a \cdot c^{2n} \cdot c^{2n-2} * cc, a \cdot c^{2n} \cdot c \cdot c^{2n-2} * ccc $. The next ``until'' move of the spoiler 2 positions to right cannot be duplicated. \qed 
\end{itemize}
\end{proof}
}

\oomit{
\begin{figure}
\begin{tikzpicture}
\draw (2,2) node{$\mathcal A_n$};
\draw (5.5,2.5) node{$--a--$}; \draw (14.5,2.5) node{$--c--$ }; 
\draw(4,2) node{I}--(5.5,2) node{lllllllolllllll}-- (7,2) node{I}-- (10,2) node{I}-- (13,2) node{I}-- (14.5,2) node{lllllllolllllll}-- (16,2) node{I};
\draw(4,1.5) node{0}; \draw(5.5,1.5) node{$i\delta$}; \draw(7,1.5) node{1}; \draw (10,1.5) node{2}; \draw (13,1.5) node{3};\draw (14.5,1.5) node{$i\delta$}; \draw (16,1.5) node{4};
\draw (14.5,1.5)--(14.5,2);

\draw (2,0.5) node{$\mathcal B_n$};
\draw (5.5,1) node{$--a--$}; \draw (14.5,1) node{$--c--$ }; 
\draw(4,0.5) node{I}--(5.5,0.5) node{llllllllllllllllllll}-- (7,0.5) node{I}-- (10,0.5) node{I}-- (13,0.5) node{I}-- (14.5,0.5) node{llllllllllllllllllll}-- (16,0.5) node{I};
\draw(4,0) node{0};\draw(5.5,0) node{$i\delta$}; \draw(7,0) node{1}; \draw (10,0) node{2}; \draw (13,0) node{3};\draw (14.5,0) node{$i\delta+\epsilon$}; \draw (16,0) node{4};
\end{tikzpicture}
\caption{$\bmtlfp \nsubseteq \mitlus$}
\label{bmtlfp:mitlus}
\end{figure}
}

\oomit{
Define an \textit{anchored} interval to be one of the form $<0,i>$ for some integer $i>0$ and any other bounded interval of the form $<i,j>$ $(i,j>0)$ to be \textit{non-anchored}. We give a translation of the fragment of \bmtlus\/ formulas with non-anchored intervals to \mtlfp.\\
\begin{proposition}
Every non-anchored \bmtlus\/ formula has an equivalent formula in \mtlfp. \\ 
\end{proposition}
Proof: Every non anchored \bmtlus\/ formula may be written such that all intervals are either of the form $<i,i+1>$ for $(i>0)$ or $[i,i]$. For both cases, we have the appropriate conversions as below.\\
Let $\alpha$ be a conversion function such that given a non-anchored $\bmtlus$ formula $\phi$, $\alpha(\phi)$ is its language equivalent \mtlfp\/ formula. Then\\
$\alpha(\phi_1 \until_{<i,i+1>} \phi_2) = \fut_{<i,i+1>} [ \alpha(\phi_2) \land \neg \past_{(0,1)}(\neg\phi) \land] \land \neg \fut_{(0,i)}(\neg\phi)$\\
$\alpha(\phi_1 \until_{[i,i]} \phi_2) = \fut_{[i,i]} [ \alpha(\phi_2) \land \neg \past_{(0,1)}(\neg\phi) \land] \land \neg \fut_{(0,i)}(\neg\phi)$\\
}

\end{document}